\def\UseBibLatex{1}
\def\input@path{{styles/}}
\providecommand{\BibLatexMode}[1]{}
\renewcommand{\BibLatexMode}[1]{#1}
  \renewcommand{\BibLatexMode}[1]{}
\theoremstyle{plain}%
\newtheorem{theorem}{Theorem}[section]
\newtheorem{lemma}[theorem]{Lemma}
\newtheorem{lemma_no_dot}[theorem]{Lemma}
\newtheorem{claim}[theorem]{Claim}%
\newtheorem{observation}[theorem]{Observation}
\theoremstyle{plain}%
\newtheorem*{remark:unnumbered}[theorem]{Remark}%
\newtheorem{remark}[theorem]{Remark}%
\newtheorem{defn}[theorem]{Definition}
\newtheorem{problem}[theorem]{Problem}
\theoremstyle{nonumberplain}%
\newtheorem{proof}{Proof:}%
\providecommand{\emphind}[1]{}%
\renewcommand{\emphind}[1]{\emph{#1}\index{#1}}
\definecolor{blue25emph}{rgb}{0, 0, 11}
\providecommand{\emphic}[2]{}
\renewcommand{\emphic}[2]{\textcolor{blue25emph}{%
      \textbf{\emph{#1}}}\index{#2}}
\providecommand{\emphi}[1]{}%
\renewcommand{\emphi}[1]{\emphic{#1}{#1}}
\definecolor{almostblack}{rgb}{0, 0, 0.3}
\providecommand{\emphw}[1]{}%
\renewcommand{\emphw}[1]{{\textcolor{almostblack}{\emph{#1}}}}%
\providecommand{\emphOnly}[1]{}%
\renewcommand{\emphOnly}[1]{\emph{\textcolor{blue25emph}{\textbf{#1}}}}
\newcommand{\myqedsymbol}{\rule{2mm}{2mm}}
\newcommand{\EliotThanks}[1]{%
   \thanks{%
      Department of Computer Science; University of Illinois; 201
      N. Goodwin Avenue; Urbana, IL, 61801, USA; {\tt
         erobson2\atgen{}illinois.edu}; {\tt
         \url{https://eliotwrobson.github.io/}.} #1}}
\newcommand{\SarielThanks}[1]{%
   \thanks{%
      Department of Computer Science; %
      University of Illinois; %
      201 N. Goodwin Avenue; %
      Urbana, IL, 61801, USA; %
      \href{mailto:spam@illinois.edu}{sariel@illinois.edu}; %
      \url{http://sarielhp.org/}.%
   #1%
   }%
}
 \newcommand{\BenThanks}[1]{%
   \thanks{Department of Computer Science;
           University of Texas at Dallas; Richardson, TX 75080, USA;
      {\tt benjamin.raichel\atgen{}utdallas.edu}; {\tt
         \url{http://utdallas.edu/\string~benjamin.raichel}.} #1}}
\newcommand{\atgen}{\symbol{'100}}
\newcommand{\HLink}[2]{\hyperref[#2]{#1~\ref*{#2}}}
\newcommand{\HLinkSuffix}[3]{\hyperref[#2]{#1\ref*{#2}{#3}}}
\newcommand{\figlab}[1]{\label{fig:#1}}
\newcommand{\figref}[1]{\HLink{Figure}{fig:#1}}
\newcommand{\thmlab}[1]{{\label{theo:#1}}}
\newcommand{\thmref}[1]{\HLink{Theorem}{theo:#1}}
\newcommand{\lemlab}[1]{\label{lemma:#1}}
\newcommand{\lemref}[1]{\HLink{Lemma}{lemma:#1}}%
\providecommand{\deflab}[1]{\label{def:#1}}
\newcommand{\defrefregY}[2]{\hyperref[def:#1]{{\textcolor{yellow}{#2}}}}
\definecolor{blackish}{rgb}{0.14, 0, 0.0}
\newcommand{\defrefY}[2]{%
   \textcolor{blackish}{%
      \renewcommand\color[2][]{}%
      \defrefregY{#1}{#2}%
   }%
}
\providecommand{\eqlab}[1]{}%
\renewcommand{\eqlab}[1]{\label{equation:#1}}
\newcommand{\Eqref}[1]{\HLinkSuffix{Eq.~(}{equation:#1}{)}}
\providecommand{\remove}[1]{}%
\newcommand{\Set}[2]{\left\{ #1 \;\middle\vert\; #2 \right\}}
\newcommand{\pth}[1]{\mleft(#1\mright)}%
\newcommand{\Vertices}{\Mh{\mathsf{V}}}%
\newcommand{\VV}{\Vertices}%
\newcommand{\ceil}[1]{\mleft\lceil {#1} \mright\rceil}
\newcommand{\brc}[1]{\left\{ {#1} \right\}}
\newcommand{\cardin}[1]{\left\lvert {#1} \right\rvert}%
\renewcommand{\th}{th\xspace}
\newcommand{\ds}{\displaystyle}%
\renewcommand{\Re}{\mathbb{R}}%
\newcommand{\ZZ}{\mathbb{Z}}%
\newlist{compactenumA}{enumerate}{5}%
\setlist[compactenumA]{topsep=0pt,itemsep=-1ex,partopsep=1ex,parsep=1ex,%
   label=(\Alph*)}%
\newlist{compactenuma}{enumerate}{5}%
\setlist[compactenuma]{topsep=0pt,itemsep=-1ex,partopsep=1ex,parsep=1ex,%
   label=(\alph*)}%
\newlist{compactenumI}{enumerate}{5}%
\setlist[compactenumI]{topsep=0pt,itemsep=-1ex,partopsep=1ex,parsep=1ex,%
   label=(\Roman*)}%
\newlist{compactenumi}{enumerate}{5}%
\setlist[compactenumi]{topsep=0pt,itemsep=-1ex,partopsep=1ex,parsep=1ex,%
   label=(\roman*)}%
\newlist{compactitem}{itemize}{5}%
\setlist[compactitem]{topsep=0pt,itemsep=-1ex,partopsep=1ex,parsep=1ex,%
   label=\ensuremath{\bullet}}%
\numberwithin{figure}{section}%
\numberwithin{table}{section}%
\numberwithin{equation}{section}%
\newcommand{\Term}[1]{\textsf{#1}}%
\newcommand{\WSPD}{\Term{WSPD}\xspace}
\newcommand{\repX}[1]{\zeta^{}_{#1}}%
\newcommand{\AlgorithmI}[1]{{%
      \textcolor[named]{AlgorithmColor}{\texttt{\bf{#1}}}%
   }}
\newcommand{\concat}{\odot}%
\newcommand{\mX}[1]{\mathsf{m}\pth{#1}}%
\newcommand{\algDistill}{\AlgorithmI{distill}\xspace}
\newcommand{\crA}{\curve_A}
\newcommand{\crB}{\curve_B}
\newcommand{\craA}{\curveA_A}%
\newcommand{\craB}{\curveA_B}
\newcommand{\SaveContent}[2]{%
   \expandafter\newcommand{#1}{#2}%
}
\providecommand{\Mh}[1]{{#1}}%
\newcommand{\curve}{\Mh{\pi}}
\newcommand{\curveA}{\Mh{\sigma}}
\newcommand{\curveB}{\Mh{\xi}}
\newcommand{\curveC}{\Mh{\eta}}
\newcommand{\dY}[2]{\left\| #1 - #2 \right\|}
\newcommand{\eps}{\varepsilon}
\newcommand{\FMS}{\EuScript{X}\index{metric space}}
\newcommand{\FMSB}{\EuScript{Y}\index{metric space}}
\newcommand{\dC}{\mathcalb{d}}%
\newcommand{\DistChar}{\dC}%
\newcommand{\dmY}[2]{\DistChar\pth{#1,#2}}%
\newcommand{\dG}{\DistChar_\G}%
\newcommand{\dGY}[2]{\dG\pth{#1,#2}}%
\newcommand{\ts}{\hspace{0.6pt}}
\newcommand{\DistCharB}{\Mh{\mathrm{d}'}}%
\newcommand{\dmBY}[2]{\DistCharB\pth{#1,#2}}%
\newcommand{\Ground}{{U}}
\newcommand{\GroundB}{{Y}}
\newcommand{\LipC}{\mathcal{L}}
\newcommand{\LipX}[1]{\mathcal{L}\pth{#1}}
\newcommand{\constA}{\mathpzc{c}}
\DeclareMathAlphabet{\mathpzc}{OT1}{pzc}{m}{it}
\newcommand{\Distor}{\mathrm{distortion}}
\newcommand{\VR}{\mathcal{V}}
\newcommand{\WR}{\mathcal{W}}
\newcommand{\PB}{{Q}}%
\newcommand{\PC}{{R}}%
\newcommand{\diamC}{\nabla}%
\newcommand{\diamX}[1]{\diamC\pth{#1}}
\newcommand{\diamY}[2]{\diamC^{}_{\!#1}\pth{#2}}
\newcommand{\diamMX}[1]{\diamY{\DistChar}{#1}}
\newcommand{\G}{\mathsf{G}}%
\newcommand{\EG}{\mathsf{E}}%
\providecommand{\Edges}{\Mh{\mathsf{E}}}%
\providecommand{\EE}{\Edges}%
\providecommand{\EdgesX}[1]{\Edges\pth{#1}}%
\providecommand{\EGX}[1]{\EdgesX{#1}}%
\newcommand{\Tree}{\mathcal{T}}
\newcommand{\deY}[2]{\DistChar_2\pth{#1,#2}}
\newcommand{\seclab}[1]{\label{sec:#1}}
\newcommand{\secref}[1]{\HLink{Section}{sec:#1}}
\providecommand{\P}{\mathsf{P}}%
\renewcommand{\P}{\mathsf{P}}%
\newcommand{\lenX}[1]{\mleft\| #1 \mright\|}
\newcommand{\GridC}{\mathcal{G}}%
\newcommand{\GridX}[1]{\GridC\pth{#1}}%
\newcommand{\cellA}{\square}%
\newcommand{\nnY}[2]{\mathsf{nn}^{}_{#1}\pth{#2}}
\newcommand{\vorZ}[3]{{#1}^{}_{\!#2}\pth{#3}}
\newcommand{\GridS}{\GridX{\tfrac{1}{8}}}
\newcommand{\shadowX}[1]{\mathcal{S}\pth{#1}}%
\newcommand{\Packing}{\mathcal{N}}%
\newcommand{\clmlab}[1]{\label{claim:#1}}
\newcommand{\clmref}[1]{\HLink{Claim}{claim:#1}}
\providecommand{\etal}{et~al.\xspace}
\renewcommand{\etal}{et~al.\xspace}
\newcommand{\pair}{\mathcalb{p}}%
\newcommand{\WRopt}{\WR_{\mathrm{opt}}}%
\newcommand{\ballC}{\mathcalb{b}}%
\newcommand{\ballY}[2]{\ballC\pth{#1,#2}}%
\newcommand{\riX}[1]{2^{#1}}
\newcommand{\hide}[1]{}
\title{Well-Separated Pairs Decomposition Revisited}
\author{%
   Sariel Har-Peled%
   \SarielThanks{Work on this paper was partially supported by NSF AF
      award CCF-2317241.}%
   \and%
   Benjamin Raichel%
   \BenThanks{Work on this paper was partially supported by NSF CCF
      award 2311179.}%
   \and%
   Eliot W. Robson%
   \EliotThanks{}%
}%
\date{\today}
\begin{document}

\maketitle

\begin{abstract}
    We revisit the notion of \WSPD (i.e., \emph{well-separated pairs-decomposition}),  presenting a new construction of \WSPD for any finite metric space, and show that it is asymptotically instance-optimal in size. Next, we describe a new \WSPD construction for the weighted unit-distance metric in the plane, and show a bound $O( \eps^{-2} n \log n)$ on its size, improving by a factor of $1/\eps^2$ over previous work.  The new construction is arguably simpler and more elegant.

    We point out that using \WSPD, one can approximate, in near-linear time, the distortion of a bijection between two point sets in low dimensions. As a new application of \WSPD, we show how to shortcut a polygonal curve such that its dilation is below a prespecified quantity. In particular, we show a near-linear time algorithm for computing a simple subcurve for a given polygonal curve in the plane so that the new subcurve has no self-intersection.
\end{abstract}

\section{Introduction}

In 1995, Callahan and Kosaraju \cite{ck-dmpsa-95} published a surprising result showing that the Euclidean metric, for a set $\P$ of $n$ points in $\Re^d$ can be compactly described as the union of $O(n/\eps^d)$ bicliques, where all the distances of edges in a single biclique are the same up to a factor of $1\pm \eps$. Elegantly, the biclique cover is computed in $O(n \log n + n/\eps^d)$ time, with each biclique being represented as a pair of nodes in a constructed tree over the point set, see  \cite{h-gaa-11} for more details. This representation of the distances of $\P$ is a \emphw{$\tfrac{1}{\eps}$-\WSPD} of $\P$.

Having a linear or near-linear size \WSPD is desirable as it readily leads to efficient spanners, distance oracle constructions, and fast all nearest-neighbor computation, among other applications. Thus, there was interest in showing the existence of \WSPD for other metrics:
\begin{compactenumI}
    \smallskip%
    \item Gao and Zhang \cite{gz-wspdu-05} showed how to construct $\tfrac{1}{\eps}$-\WSPD, of size $O(\eps^{-4} n \log n)$, for the unit-distance graph for a set $\P$ of $n$ points in the plane.  They also show a bound of $O_\eps(n^{2-2/d})$ for $d>2$ (they do not state the exact dependency on $\eps$).

    \smallskip%
    \item Har-Peled and Mendel \cite{hm-fcnld-06} showed the existence of such \WSPD for doubling metrics. These metrics are an abstraction of low-dimensional Euclidean space, and the bounds match the Euclidean bounds of Callahan and Kosaraju.

    \smallskip
    \item Gudmundsson and Wong \cite{gw-wspdl-24} showed a $\tfrac{1}{\eps}$-\WSPD of size $O( \lambda^2 \eps^{-4} n \log n))$ for a $\lambda$-low-density graph. Such a graph is a straight-line embedding in the plane, where no disk intersects more than $\lambda$ edges with length exceeding the disk's diameter.

    \smallskip%
    \item Deryckere \etal \cite{dgrsw-wsstc-25} showed the existence of a linear size \WSPD for $c$-packed graphs.
\end{compactenumI}

\subsection*{Our results}

\begin{compactenumI}
    \smallskip%
    \item \textsf{Instance optimal \WSPD.} %
    We show how to construct \WSPD using packings. The new construction has asymptotically optimal-sized \WSPD for the given instance.  See \secref{wspd_via_packings}.

    \smallskip%
    \item \textsf{Better \WSPD for unit disk-graphs.} %
    We revisit the result of Gao and Zhang \cite{gz-wspdu-05} and slightly improve their bound by a factor of $1/\eps^2$ to $O( \eps^{-2} n \log n)$. Our construction is arguably simpler than the construction of Gao and Zhang. Note that there is still a gap of $O(\log n)$ between the upper and lower bounds on the size of the optimal \WSPD in this case. Whatever the case might be, one can, of course, use the optimal construction mentioned above.  See \secref{wspd_disks} for details.

    \smallskip%
    \item \textsf{Approximating distortion of a mapping between two low-dimensional point sets.}  Consider a bijection $f:\P \rightarrow \P'$ between two sets of points, say $\P \subseteq \Re^d$ and $\P' \subseteq \Re^{d'}$. We point out that \WSPD can be used to approximate the distortion of $f$, in $O(n \log n + n/\eps^{\max(d, d')})$. While this is a relatively easy consequence of known results, (it seems like) it was not observed before. See \secref{distortion} for details.

    \smallskip%
    \item \textsf{Computing a simple subcurve.}  We study the problem of given a simple polygonal curve $\curveA$ in the plane, how to compute a simple subcurve $\curveB \subseteq \cup_{p \in \curveA}p$ that shares the same endpoints as the original curve, without having any self-intersection. We present an elegant divide and conquer algorithm for this task that runs in $O(n \log^2 n)$ time, see
    \secref{simple_subcurve}.

    \smallskip%
    \item \textsf{Approximate shortcutting long detours.}
    Given a polygonal curve $\curveA$ in $\Re^d$, consider the task of shortcutting parts of the curve that have high dilation.  Specifically, given two vertices $p_i, p_j$ of $\curveA$, their \emphw{dilation} is the ratio $\lenX{\curveA[p_i, p_j]} / \dY{p_i}{p_j} > \alpha$, where $\lenX{\curveA[p_i, p_j]}$ denotes the distance between $p_i$ and $p_j$ along $\curveA$. Informally, high dilation indicates that the curve ``wonders around'' when moving from $p_i$ to $p_j$, and then comes back to the vicinity of $p_i$.

    Thus, given a parameter $\alpha > 1$, a \emph {$\alpha$-shortcut} between vertex $p_i$ and $p_j$ of $\curveA$  is allowed if their dilation is $\geq \alpha$ -- that is, we replace the portion of the curve between $p_i$ and $p_j$ by the straight segment $p_ip_j$.
    Consider repeatedly performing such shortcuts until the resulting curve has no such shortcut.

    We show that an algorithm, similar to the one for computing a simple subcurve described above, coupled with \WSPD, leads to a near linear time algorithm that computes a polygonal subcurve, using a subset of the vertices of the original curve, where only shortcuts of dilation larger than $(1-\eps)\alpha$ are used, and no $\alpha$-shortcut remains. See \secref{detour} for details.

\end{compactenumI}

\section{Instance-optimal \WSPD and how to construct it}
\seclab{wspd_via_packings}

\subsection{Preliminaries}

\begin{defn}
    \deflab{metric_space_def}%
    A \emphi{metric space} $\FMS$ is a pair $\FMS = (\Ground, \DistChar )$, where $\Ground$ is the ground set and $\DistChar: \Ground \times \Ground \rightarrow [0, \infty)$ is a \emphi{metric} satisfying the following axioms: (i) $\dmY{x}{y} = 0$ if and only if $x =y$, (ii) $\dmY{x}{y} = \dmY{y}{x}$, and (iii) $\dmY{x}{y} + \dmY{y}{z} \geq \dmY{x}{z}$ (triangle inequality).
\end{defn}

\begin{defn}
    For a set $\P \subseteq \Ground$, its \emphi{diameter} is $\diamY{\dC}{\P} = \max_{x,y \in \P} \dmY{x}{y}$.
\end{defn}

\begin{defn}
    For a point $q \in \Ground$, and a set $\P \subseteq \Ground$, its \emphi{nearest-neighbor} in $\P$, is the point
    \begin{equation*}
        \nnY{\P}{q} = \arg \min_{p \in \P} \dmY{q}{p}.
    \end{equation*}
    The distance between $q$ and its nearest neighbor is denoted by $\dmY{q}{\P} = \min_{p \in \P} \dmY{q}{p}$.
\end{defn}

\begin{defn}
    For any two sets $X,Y \subseteq \Ground$, let $\dmY{X}{Y} = \min_{x \in X, y \in Y} \dmY{x}{y}$.
\end{defn}

\begin{defn}
    \deflab{packing}%
    Consider a metric space $(\Ground, \DistChar)$, and a set $\P \subseteq \Ground$.  A set $\Packing \subseteq \P$ is an \emphi{$r$-packing} for $\P$ if the following hold:
    \begin{compactenumi}
        \smallskip%
        \item \emphw{Covering property}: All the points of $\P$ are within a distance $< r$ from the points of $\Packing$. Formally, for all $p \in \P$, $\dmY{p}{\Packing} <r$.

        \smallskip%
        \item \emphw{Separation property}: For any pair of points $x, y \in \Packing$, we have that $\dmY{x}{y} \geq r$.
    \end{compactenumi}
\end{defn}
One can compute such a packing by repeatedly adding any point in $\P$ at a distance $\geq r$ from the current set till no such point remains. Faster algorithms are known in some cases \cite{hr-nplta-15,ehs-agcds-20}.

\begin{defn}
    For a point $x \in \Ground$, and a radius $r \geq 0$, the \emphi{ball} of radius $r$ centered at $x$ is the set
    \begin{math}
        \ballY{x}{r} = \Set{z \in \Ground}{\dmY{x}{z} \leq r}.
    \end{math}
\end{defn}

\subsection{Background}
\seclab{wspd_def}

For a graph $\G= (\VV, \EE)$, and a set $Y \subseteq \VV$, the \emphi{induced subgraph} of $\G$ by $Y$ is $\G_Y = (Y, \{ uv \in \EE \mid u,v, \in Y\}$.
In the following, assume we are given a metric space $(\Ground,\DistChar)$.

\begin{defn}
    For two sets $B,C \subseteq \Ground$, let
    \begin{math}
        B \otimes C =%
        \Set{ bc}{b \in B, c \in C, b \neq c}.
    \end{math}

\end{defn}

\begin{defn}
    \deflab{pair_decomposition}%
    For a point set $\P \subseteq \Ground$, a \emphi{pair decomposition} of $\P$ is a set of pairs
    \[
        \WR = \brc{\bigl. \brc{A_1,B_1},\ldots,\brc{A_s,B_s}},
    \]
    such that
    \begin{enumerate*}[label=(\Roman*)]
        \item $A_i,B_i\subset \P$ for every $i$,
        \item $A_i \cap B_i = \emptyset$ for every $i$, and
        \item $\bigcup_{i=1}^s A_i \otimes B_i = \binom{\P}{2} = \P \otimes \P$.
    \end{enumerate*}
\end{defn}

\begin{defn}
    \deflab{well_separated}%
    The pair $\{\PB, \PC\}$ is \emphi{$\tfrac{1}{\eps}$-separated} by $\DistChar$ if
    \begin{equation*}
        \max \pth{\bigl. \diamY{\DistChar}{\PB},
           \diamY{\dC}{\PC} } \leq \eps \ts \dmY{\PB}{\PC},
        \qquad\text{where}\qquad%
        \dmY{\PB}{\PC} = \ds \min_{x \in \PB, y \in \PC} \dmY{x}{y}.
    \end{equation*}
\end{defn}

\begin{defn}
    \deflab{WSPD}%
    For a point set $\P$, a \emphOnly{well-separated pair decomposition} of $\P$ with parameter $1/\eps$, denoted by \emphw{$\tfrac{1}{\eps}$-\WSPD{}}, is a pair decomposition
    \begin{math}
        \WR = \brc{\bigl.  \brc{A_1,B_1},\ldots,\brc{A_s,B_s}}
    \end{math}
    of $\P$, such that, for all $i$, the sets $A_i$ and $B_i$ are $\tfrac{1}{\eps}$-separated.
\end{defn}

\begin{theorem}[\cite{ck-dmpsa-95}]
    \thmlab{WSPD}%
    For $1 \geq \eps>0$, and a set $\P$ of $n$ points in $\Re^d$, one can construct, in $O \pth{ n \log n + {n}/{ \eps^{d}}}$ time, an $\tfrac{1}{\eps}$-\WSPD of $\P$ of size $O(n/{ \eps^{d}})$.
\end{theorem}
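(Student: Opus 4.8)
\quad
The plan is to follow the classical hierarchical-decomposition approach. First I would build, in $O(n\log n)$ time, a compressed quadtree $\Tree$ over $\P$: a tree in which every node $v$ owns an axis-parallel dyadic cell $\square_v$, the point set $\P_v = \P\cap\square_v$ is partitioned among the children of $v$ (the non-empty sub-quadrants, with paths of degree-one nodes contracted into single compressed edges), and each leaf owns one point. Such a tree has $O(n)$ nodes and is constructible deterministically in $O(n\log n)$ time by standard techniques (a fair split tree would serve equally well). In a bottom-up pass I would also attach to each node $v$ a representative $\mathrm{rep}_v\in\P_v$ and a radius $r_v$ with $\P_v\subseteq\ballY{\mathrm{rep}_v}{r_v}$ and $r_v$ at most, say, twice the diameter of $\P_v$ (so $r_v=0$ at a leaf); this is the quantity I will use as the ``size'' of $v$.

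Next, the pairs are generated by a recursive routine $\textsf{pair}(u,v)$ on nodes of $\Tree$, called initially on $(\RootX{\Tree},\RootX{\Tree})$; if $u=v$ is a leaf it produces nothing. If the balls $\ballY{\mathrm{rep}_u}{r_u}$ and $\ballY{\mathrm{rep}_v}{r_v}$ witness that $\{\P_u,\P_v\}$ is $\tfrac{1}{\eps}$-separated (a test on $\mathrm{rep}_u,\mathrm{rep}_v,r_u,r_v$), output the single pair $\{\P_u,\P_v\}$; otherwise take the node with the larger radius (ties broken arbitrarily; note that if both are leaves the pair is already separated, since distinct points have positive distance and $\eps>0$) and recurse on that node's children against the other node. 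Correctness of the pair-decomposition axioms is then routine: by induction, the calls descending from $\textsf{pair}(u,v)$ together cover $\P_u\otimes\P_v$ (using that children point sets partition the parent's), so the output covers $\P\otimes\P$; the recursion terminates because each step strictly refines the decomposition; and every output pair $\{\P_u,\P_v\}$ has $\P_u\cap\P_v=\emptyset$, since $\tfrac{1}{\eps}$-separation with $\eps\le 1$ forces the enclosing balls — hence the point sets — to be disjoint, and a node is never paired with itself. Thus the output is a valid $\tfrac{1}{\eps}$-\WSPD.

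The crux is the size bound $O(n/\eps^d)$. I would charge each output pair to one of its two defining nodes — the one whose parent in $\Tree$ was refined at the recursive step that created the pair — and then bound the number of pairs charged to a fixed node $v$. For such a pair, non-separation of $\mathrm{parent}(v)$ from the partner, separation of $v$ itself from the partner, and the fact that the recursion always refines the node of larger size together pin the partner to have size comparable to $r_v$ and to lie within distance $O(r_v/\eps)$ of $v$; moreover the partners charged to $v$ form an antichain of $\Tree$, so their cells are interior-disjoint and of similar size. A standard volume/packing argument in $\Re^d$ then caps their number at $O(\eps^{-d})$, and summing over the $O(n)$ nodes yields $O(n/\eps^d)$ pairs. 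I expect this packing argument to be the main obstacle, the delicate point being to keep the ``comparable size'' claim honest across compressed edges, where a refinement step can shrink a cell arbitrarily without changing its point set; this is handled by phrasing everything in terms of the bounding balls of the point sets (whose radius is invariant along a compressed edge) and accounting separately for the nodes at the lower ends of such edges.

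For the running time, building $\Tree$ together with the representatives and radii costs $O(n\log n)$; the recursion spends $O(1)$ per call, and the number of calls is $O(n/\eps^d)$ by an argument parallel to the size bound — each branching call of $\textsf{pair}$ has at least two children except when it sits above a compressed edge, where it has a single child that itself branches, so the recursion tree has $O(\#\{\text{output pairs}\}+n)$ nodes. Adding these contributions gives the stated $O(n\log n + n/\eps^{d})$ bound.
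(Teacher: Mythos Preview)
The paper does not prove this theorem at all: it is stated as a known result and attributed to Callahan and Kosaraju~\cite{ck-dmpsa-95} (see also the reference to~\cite{h-gaa-11} in the introduction for details). Your sketch is the standard fair-split-tree/compressed-quadtree construction and packing argument, which is exactly the classical proof being cited; it is correct in outline, and there is nothing in the paper to compare it against beyond the citation itself.
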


For a pair $\pair = \{B,C \} \in \WR$, its \emphi{diameter} is $\diamX{\pair} = \diamX{ B \cup C}$.

\subsection{An instance  optimal construction of \WSPD}

The input is a metric space $(\Ground, \DistChar )$, and a set $\P \subseteq \Ground$ of $n$ points, and a parameter $\eps \in (0,1)$. The task at hand is to compute $\tfrac{1}{\eps}$-\WSPD of $\P$.

\begin{defn}
    For a set $\Packing \subseteq \P$, the \emphw{Voronoi cell} of $p \in \Packing$ in $\P$ is
    \begin{equation}
        \vorZ{\P}{\Packing}{p} = \Set{ q \in \P}{\smash{\nnY{q}{\Packing}} = p}.
        \eqlab{v-cell}%
    \end{equation}
    The set $\vorZ{\P}{\Packing}{p}$ is the set of all the points of $\P$ that $p$ is their nearest neighbor in $\Packing$, under
    $\dC$.
\end{defn}

\paragraph*{Construction.}
Let $I$ be the set of all integers $i$ such that there is a pair of points $x,y \in \P$ with $\dmY{x}{y} \in [\tfrac{4}{\eps}\riX{i}, \tfrac{8}{\eps}\riX{i}]$ (note, that $i$ might be negative). Next, compute for all $i \in I$, a $\riX{i}/2$-packing $\Packing_i$ of $\P$, and proceed to compute the set of pairs
\begin{equation*}
    \WR_i
    =%
    \Set{\bigl.\{ \vorZ{\P}{\Packing_i}{x}, \vorZ{\P}{\Packing_i}{y}\}}{
       x,y  \in \Packing_i
       \text{ and }
       \dmY{x}{y} \in [\tfrac{2}{\eps}\riX{i}, \tfrac{16}{\eps}
       \riX{i}] },
\end{equation*}
see \Eqref{v-cell}.  The desired \WSPD is $\WR = \cup_{i \in I} \WR_i$.

\begin{remark}
    The above construction algorithm takes polynomial time. In some cases, one can get near-linear running time (e.g., constant doubling dimension with polynomial spread).
\end{remark}

\begin{theorem}
    \thmlab{WSPD_generic}%
    Let $\FMS = (\Ground,\DistChar)$ be a  metric space,  let $\P \subseteq \Ground$ be a set of $n$ points, and $\eps \in (0,1)$ be a parameter. Then the above algorithm computes (in polynomial time) a $\Bigl.\tfrac{1}{\eps}$-\WSPD $\WR$ for $\FMS$, such that, any $c/\eps$-\WSPD $\WRopt$ of $\FMS$ must be of size $\Omega( \cardin{\WR})$, where $c$ is some sufficiently large constant.
\end{theorem}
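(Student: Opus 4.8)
The plan is to prove the two assertions—that $\WR$ is a legal $\tfrac1\eps$-\WSPD, and that its size is within a constant factor of optimal—essentially independently; the legality part uses only $\eps\le1$. \textbf{Legality.} For separation, fix a pair $\{\vorZ{\P}{\Packing_i}{x},\vorZ{\P}{\Packing_i}{y}\}\in\WR_i$. By the covering property of the $\riX{i}/2$-packing $\Packing_i$, every point of the cell $\vorZ{\P}{\Packing_i}{x}$ lies within distance $<\riX{i}/2$ of its center $x$ (its nearest neighbour in $\Packing_i$), so each of the two cells has $\DistChar$-diameter $<\riX{i}$, and they are disjoint because $\dmY{x}{y}\ge\tfrac2\eps\riX{i}>0$ forces $x\ne y$. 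The triangle inequality makes the distance between the two cells at least $\dmY{x}{y}-\riX{i}\ge\tfrac2\eps\riX{i}-\riX{i}$, so $\eps$ times it is at least $(2-\eps)\riX{i}\ge\riX{i}$, which dominates the larger cell diameter; hence the pair is $\tfrac1\eps$-separated. For covering, take distinct $p,q\in\P$ and pick the integer $i$ with $\dmY{p}{q}\in[\tfrac4\eps\riX{i},\tfrac8\eps\riX{i}]$ (the windows $[\tfrac4\eps\riX{i},\tfrac8\eps\riX{i}]$, $i\in\ZZ$, cover $(0,\infty)$, and this $i$ lies in $I$); letting $x,y$ be the nearest neighbours of $p,q$ in $\Packing_i$ (so $\dmY{p}{x},\dmY{q}{y}<\riX{i}/2$), two triangle-inequality estimates land $\dmY{x}{y}$ inside $[\tfrac2\eps\riX{i},\tfrac{16}\eps\riX{i}]$, so $\{\vorZ{\P}{\Packing_i}{x},\vorZ{\P}{\Packing_i}{y}\}\in\WR_i$ and it covers $(p,q)$. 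Polynomial running time is immediate: $|I|=O(n^2)$, and each packing and each $\WR_i$ is computable in polynomial time.

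\textbf{Optimality, charging setup.} Let $\WRopt$ be an arbitrary $c/\eps$-\WSPD, $c$ a large absolute constant fixed below. For every scale $i\in I$ and every unordered pair $\{x,y\}$ with $x,y\in\Packing_i$ and $\dmY{x}{y}\in[\tfrac2\eps\riX{i},\tfrac{16}\eps\riX{i}]$ (these are exactly the pairs indexing $\WR_i$, since $x\mapsto\vorZ{\P}{\Packing_i}{x}$ is injective on $\Packing_i$—each cell contains its own center and no other), the points $x,y$ are distinct elements of $\P$, so $\{x,y\}$ is covered by some $\pair=\{A,B\}\in\WRopt$ with, say, $x\in A$ and $y\in B$; put one token on $\pair$. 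The total number of tokens dropped is exactly $\sum_{i\in I}|\WR_i|\ge|\WR|$, so it suffices to show that every $\pair\in\WRopt$ receives only $O(1)$ tokens.

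\textbf{Bounding tokens per pair.} Fix $\pair=\{A,B\}\in\WRopt$; its $c/\eps$-separation gives $\max(\diamMX{A},\diamMX{B})\le\tfrac\eps c\dmY{A}{B}$. Any token-source $\{x,y\}$ with $x\in A,y\in B$ then satisfies $\dmY{A}{B}\le\dmY{x}{y}\le\dmY{A}{B}+\diamMX{A}+\diamMX{B}\le(1+\tfrac{2\eps}{c})\dmY{A}{B}$, so $\pair$ pins $\dmY{x}{y}$ to within a factor close to $1$; intersecting with $\dmY{x}{y}\in[\tfrac2\eps\riX{i},\tfrac{16}\eps\riX{i}]$ confines $\riX{i}$ to $[\tfrac{\eps}{16}\dmY{A}{B},\tfrac{\eps}{2}(1+\tfrac{2\eps}{c})\dmY{A}{B}]$, a range of constant multiplicative width, so only $O(1)$ scales $i$ can place a token on $\pair$. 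Within one such scale, $\diamMX{A}\le\tfrac\eps c\dmY{x}{y}\le\tfrac{16}{c}\riX{i}<\riX{i}/2$ once $c>32$, while any two distinct points of $\Packing_i$ are $\ge\riX{i}/2$ apart, so $|A\cap\Packing_i|\le1$ and $|B\cap\Packing_i|\le1$; since each token-source charging $\pair$ has exactly one point in $A$ and one in $B$ (both in one side is ruled out by $A\cap B=\emptyset$), that scale contributes at most one token. Hence $\pair$ receives $O(1)$ tokens, so $\sum_{i\in I}|\WR_i|=O(|\WRopt|)$ and $|\WRopt|=\Omega(|\WR|)$.

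\textbf{Main obstacle.} The legality bounds are routine triangle-inequality bookkeeping once the three dyadic windows—$[\tfrac4\eps\riX{i},\tfrac8\eps\riX{i}]$ for membership in $I$, $[\tfrac2\eps\riX{i},\tfrac{16}\eps\riX{i}]$ for $\WR_i$, and the implicit separation margin—are aligned. The real work is the charging argument, and the delicate choice there is taking $c$ large enough to obtain both consequences simultaneously: (a) a $c/\eps$-separated pair localizes $\dmY{x}{y}$ tightly enough that the scales it can absorb form a constant-length range, and (b) each of its two sides has $\DistChar$-diameter strictly below $\riX{i}/2$, which is what lets the packing separation property cap each side at a single packing point—crucially, in an arbitrary metric space there is no a priori bound on the number of $\riX{i}/2$-separated points in a ball of radius $O(\riX{i})$, so this cannot be obtained by a volume/doubling argument and must be forced by shrinking the sides. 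I would also confirm that the dependence ``$\riX{i}$ bounds $\diamMX{A}$, which in turn restricts $\riX{i}$'' is not circular: it is not, since $\dmY{A}{B}$ is a fixed attribute of $\pair$ and every inequality flows outward from it.
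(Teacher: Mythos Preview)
Your proof is correct. The legality argument matches the paper's essentially line for line. For optimality, you and the paper use the same two core observations---(1) within a fixed scale $i$, the packing separation $\ge\riX{i}/2$ together with $\diamMX{A},\diamMX{B}<\riX{i}/2$ forces $|A\cap\Packing_i|,|B\cap\Packing_i|\le1$; and (2) across scales, the tight localization of $\dmY{x}{y}$ for a $c/\eps$-separated pair limits the number of contributing $i$---but you package them differently. The paper partitions $\WR$ into seven residue classes $\VR_c=\bigcup_i\WR_{7i+c}$ and argues by contradiction that a $33/\eps$-separated pair cannot dominate two pairs of any single $\VR_c$ (handling the same-scale and far-apart-scales cases separately), yielding the explicit bound $|\WRopt|\ge|\WR|/7$. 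Your direct charging argument avoids the residue-class detour and is arguably cleaner, at the cost of leaving the constants implicit; conversely, the paper's version gives a concrete $c=33$ and factor~$7$.
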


\begin{proof}
    Consider any pair of points $x,y \in \P$, and let $\ell = \dmY{x}{y}$. Let $i \in I$ be the index such that $\ell \in [\tfrac{4}{\eps}\riX{i}, \tfrac{8}{\eps} \riX{i}]$.  Let $x' = \nnY{\Packing_i}{x}$ and $y' = \nnY{\Packing_i}{y}$. Observe that $\dmY{x}{x'},\dmY{y}{y'} \leq \riX{i}/2$. Thus, we have
    \begin{equation*}
        \ell - \riX{i} \leq \dmY{x'}{y'} \leq \ell + \riX{i}
        \qquad\implies\qquad%
        \dmY{x'}{y'} \in [\tfrac{3}{\eps}\riX{i}, \tfrac{9}{\eps}\riX{i}].
    \end{equation*}
    Thus $x \in X = \vorZ{\P}{\Packing_i}{x'}$ and $y \in Y = \vorZ{\P}{\Packing_i}{y'}$, and $\{X,Y \} \in \WR_i \subseteq \WR$. Furthermore, we have
    \begin{equation*}
        \dmY{X}{Y}
        \geq
        \ell - 2 \cdot \riX{i}
        \geq
        \tfrac{4}{\eps} \riX{i}
        -2 \cdot \riX{i}
        \geq
        \tfrac{2}{\eps}\riX{i}
        \qquad \text{and}\qquad%
        \diamY{\dC}{X},%
        \diamY{\dC}{Y}%
        \leq
        \riX{i}
        =
        \tfrac{\eps}{2} \cdot \tfrac{2}{\eps} \riX{i}
        <
        \eps
        \dmY{X}{Y}.
    \end{equation*}
    Namely, the points $x,y$ are covered by a pair of $\WR_i$, and the pair is $\tfrac{1}{\eps}$-separated. A similar argumentation shows that all the pairs in $\WR$ are $\tfrac{1}{\eps}$-separated.

    \medskip%

    As for the size optimality. Consider $\WR_i$, for some $i \in I$, any pair $\{B,C\} \in \WR_i$ is induced by a pair of points $b,c \in \Packing_i$, that is
    \begin{equation*}
        B = \vorZ{\P}{\Packing_i}{b}\qquad\text{ and }\qquad
        C = \vorZ{\P}{\Packing_i}{c}.
    \end{equation*}
     A pair $\{X,Y\}$, with $X,Y \subseteq \P$, \emphi{dominates} the pair $\{B,C\}$ if $bc \in X \oplus Y$. We break the pairs of $\WR$ into $7$ sets, as follows
    \begin{equation*}
        \VR_c = \bigcup_{i \in \ZZ} \WR_{7i + c}
        \qquad\text{for}\qquad%
        c \in \{0,\ldots, 6\}.
    \end{equation*}

    We claim that no $33/\eps$-separated pair $\{X,Y\}$ can dominate two pairs of (say) $\VR_0$.  For the sake of contradiction, consider two pairs $\{B_1, C_1\}, \{ B_2, C_2\} \in \VR_0$, that are induced by the pairs of points $\pair_1 = \{x_1, y_1\}$ and $\pair_2 = \{x_2,y_2\}$, respectively,  and assume $x_1, x_2 \in X$ and $y_1, y_2 \in Y$, and
    $\pair_1, \pair_2$ are both dominated by $\{X,Y\}$.
    There are two possibilities: \medskip%
    \begin{compactenumI}
        \item The two pairs belong to the set $\WR_i \subseteq \VR_0$.
        It must be that either $\dmY{x_1}{x_2} \geq \riX{i}/2$ or $\dmY{y_1}{y_2} \geq \riX{i}/2$, since $\Packing_i$ is a $\riX{i}/2$-packing. So, assume $\dmY{x_1}{x_2} \geq \riX{i}/2$. We have that
        \begin{equation*}
            \diamMX{X}
            \geq%
            \frac{\riX{i}}{2}
            =%
            \frac{\eps}{32} \cdot \frac{16  }{\eps } \cdot \riX{i}
            \geq
            \frac{\eps}{32}\dmY{x_1}{y_1}
            >
            \frac{\eps}{33}\dmY{X}{Y}.
        \end{equation*}
        Namely, the pair $X, Y$ is not $33/\eps$-separated. A contradiction.

        \medskip
        \item Consider the case that $\{B_1, C_1\} \in \WR_i$, and $\{B_2, C_2\} \in\WR_{i+t}$, where $t >6$. By construction, we have
        \begin{equation*}
            \dmY{x_1}{y_1}
            \leq
            \tfrac{16}{\eps} \riX{i}
            <
            \tfrac{32}{\eps} \riX{i}
            = %
            \tfrac{1}{4\eps} \riX{i+7}
            \leq
            \tfrac{1}{4}\dmY{x_2}{y_2}.
        \end{equation*}
        By the triangle inequality, we have
        \begin{math}
            \dmY{x_2}{y_2}
            \leq
            \dmY{x_2}{x_1}
            +
            \dmY{x_1}{y_1}
            +
            \dmY{y_1}{y_2}.
        \end{math}
        Implying that
        \begin{equation*}
            \dmY{x_2}{y_2}
            \leq
            \dmY{x_2}{x_1}
            +
            \dmY{x_1}{y_1}
            +
            \dmY{y_1}{y_2}
            \leq
            \tfrac{1}{4}\dmY{x_2}{y_2}
            +
            \dmY{x_1}{x_2}
            +
            \dmY{y_1}{y_2}.
        \end{equation*}
        Leading to  the contradiction
        \begin{equation*}
            \tfrac{\eps}{33} \dmY{x_1}{y_1}
            \geq
            \max\bigl( \diamX{X}, \diamX{Y} \bigr)
            \geq
            \max\bigl( \dmY{x_2}{y_2}, \dmY{y_1}{y_2} \bigr)
            \geq
            \tfrac{3}{8}
            \dmY{x_2}{y_2}
            >
            \dmY{x_1}{y_1},
        \end{equation*}
    \end{compactenumI}
    \medskip%
    Let $\WRopt$ be the optimal $\tfrac{33}{\eps}$-\WSPD of $\P$. Any $\tfrac{33}{\eps}$-\WSPD must dominate all the pairs in $\VR_c$, for $c \in \{0,\ldots,6\}$. But the above implies that each pair of $\WRopt$ dominates only a single pair in $\VR_c$. We conclude that
    $\cardin{\WRopt} \geq \max_{c=0}^6 \cardin{\VR_c} \geq \cardin{\WR}/7$.
\end{proof}

\section{\WSPD for weighted unit-distance graphs}
\seclab{wspd_disks}%

\subsection{Construction}

\begin{defn}
    For a set of points $\P \subseteq \Re^d$ the \emphi{weighted unit-distance graph} $\G =(\P,\EE)$ connects any two points $x,y \in \P$, such that $\dY{x}{y} \leq 1$. The \emphw{weight} of such an edge $xy \in \EE$ is $\dY{x}{y}$. The \emphw{shortest-path metric} $\dGY{x}{y}$ is the length of the shortest-path between $x$ and $y$ in $\G$. If $\G$ is connected, then its diameter is bounded by $n-1$
\end{defn}

\paragraph{Input.}
Let $\P$ be a set of $n$ points in the plane, and let $\G= (\P, \EG)$ be the unit-distance graph of $\P$. Here $\dG$ denotes the shortest path metric of the graph $\G$. The graph $\G$ is assumed to be connected. We are also given a parameter $\eps \in (0,1)$.

\subsubsection{The \WSPD construction algorithm}
\seclab{alg}

We modify the construction of \thmref{WSPD_generic} to use the \WSPD for the Euclidean case for short distances.

\paragraph*{Short distances.}

We compute the $\tfrac{64}{\eps}$-\WSPD $\WR_{<}$ for $\P$, using \thmref{WSPD}, throwing away any pair $\{ A,B\}$ such that $\diamX{A} > 1$ or $\diamX{B} > 1$, where $\diamX{A}$ is the Euclidean diameter of $A$. Let $\WR_1$ be the resulting pair decomposition.

\paragraph*{The $i$\th resolution.}
For $i \in 2, \ldots,2 + \ceil{\log_2 n}$, let $r_i = 3 \cdot 2^{i}$. Let $\Packing_i$ be an $r_i/2$-packing of $\G$, and let
\begin{equation*}
    \WR_i
    =%
    \Set{\bigl.\{ \vorZ{\P}{\Packing_i}{x}, \vorZ{\P}{\Packing_i}{y}\}}{
       x,y  \in \Packing_i
       \text{ and }
       \dGY{x}{y} \in [2r_i/\eps, 16r_i/\eps] }.
\end{equation*}

\paragraph*{The \WSPD decomposition.}

The desired $\tfrac{1}{\eps}$-\WSPD decomposition is the set
$\WR = \cup_i \WR_i$.

\subsection{Analysis}

\begin{claim}
    \clmlab{sep}%
    All the pairs $\{X,Y\} \in \WR$ are $\tfrac{1}{\eps}$-separated in $\G$.  Furthermore, for any distinct $x,y \in \P$, there exists a pair $\{X,Y\} \in \WR$, such that $x \in X$ and $y \in Y$.
\end{claim}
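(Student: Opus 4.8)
The plan is to verify the two assertions of the claim ($\tfrac1\eps$-separation and covering) separately for the short-distance part $\WR_1$ and for the resolution families $\WR_i$, $i\ge 2$, using two elementary facts about $\dG$: \textbf{(a)} $\dGY{x}{y}\ge\dY{x}{y}$ for all $x,y\in\P$ (a shortest path is a polygonal curve whose edge-weights are the Euclidean edge-lengths, so its total length is at least $\dY{x}{y}$); and \textbf{(b)} if $\dY{x}{y}\le 1$ then $xy\in\EG$, so combined with (a) we get $\dGY{x}{y}=\dY{x}{y}$. In particular, if $A\subseteq\P$ has Euclidean diameter at most $1$ then $\diamY{\dG}{A}=\diamX{A}$, and $\dGY{A}{B}\ge\dY{A}{B}$ for all $A,B\subseteq\P$.

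\textbf{Separation.} A pair $\{A,B\}\in\WR_1$ survives the filtering applied to the $\tfrac{64}{\eps}$-\WSPD, so $\max(\diamX{A},\diamX{B})\le\tfrac{\eps}{64}\dY{A}{B}\le 1$; hence, by the remarks above, $\max(\diamY{\dG}{A},\diamY{\dG}{B})=\max(\diamX{A},\diamX{B})\le\tfrac{\eps}{64}\dGY{A}{B}<\eps\,\dGY{A}{B}$, which is $\tfrac1\eps$-separation in $\G$. For a pair $\{X,Y\}=\{\vorZ{\P}{\Packing_i}{x},\vorZ{\P}{\Packing_i}{y}\}\in\WR_i$, with $x,y\in\Packing_i$ and $\dGY{x}{y}\in[2r_i/\eps,16r_i/\eps]$, the covering property of the $r_i/2$-packing $\Packing_i$ gives $\dGY{p}{x}<r_i/2$ for all $p\in X$ and $\dGY{q}{y}<r_i/2$ for all $q\in Y$; so, by the triangle inequality, $\diamY{\dG}{X},\diamY{\dG}{Y}<r_i$ while $\dGY{X}{Y}>\dGY{x}{y}-r_i\ge 2r_i/\eps-r_i$. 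Since $\eps\le 1$, we get $\eps\,\dGY{X}{Y}>(2-\eps)r_i\ge r_i>\max(\diamY{\dG}{X},\diamY{\dG}{Y})$, as required (and $X\cap Y=\emptyset$ since $x\ne y$ are distinct Voronoi sites).

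\textbf{Covering.} Fix distinct $x,y\in\P$. If $\dY{x}{y}\le 64/\eps$, take the pair $\{A,B\}$ of the $\tfrac{64}{\eps}$-\WSPD with $x\in A$ and $y\in B$; then $\max(\diamX{A},\diamX{B})\le\tfrac{\eps}{64}\dY{A}{B}\le\tfrac{\eps}{64}\dY{x}{y}\le 1$, so $\{A,B\}$ is not discarded and lies in $\WR_1$. Otherwise $\ell:=\dGY{x}{y}\ge\dY{x}{y}>64/\eps$ by (a). Let $i$ be the largest index with $r_i\le\ell/(2/\eps+1)$; since $\eps\le 1$ gives $16/\eps-1\ge 2(2/\eps+1)$ and consecutive $r_i=3\cdot 2^i$ differ by a factor of two, we also have $r_i>\tfrac12\cdot\ell/(2/\eps+1)\ge\ell/(16/\eps-1)$, so $\ell\in[r_i(2/\eps+1),\,r_i(16/\eps-1)]$; moreover $i\ge 2$ since $\ell/(2/\eps+1)>64/(2+\eps)>r_2=12$, and $i\le 2+\ceil{\log_2 n}$ since $\G$ is connected, whence $\ell\le n-1<12\cdot 2^{\ceil{\log_2 n}}=r_{2+\ceil{\log_2 n}}$. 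Now let $x'=\nnY{\Packing_i}{x}$ and $y'=\nnY{\Packing_i}{y}$; the covering property of $\Packing_i$ gives $\dGY{x}{x'},\dGY{y}{y'}<r_i/2$, hence $\dGY{x'}{y'}\in(\ell-r_i,\ell+r_i)\subseteq[2r_i/\eps,16r_i/\eps]$. Therefore $x'\ne y'$, the pair $\{\vorZ{\P}{\Packing_i}{x'},\vorZ{\P}{\Packing_i}{y'}\}$ belongs to $\WR_i\subseteq\WR$, and it contains $x$ and $y$ respectively.

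\textbf{Main obstacle.} Every inequality above is a routine triangle-inequality estimate together with facts (a)--(b); the one step that requires genuine care is the resolution selection in the covering argument. There one must confirm that for \emph{every} pair not already handled by $\WR_1$---in particular those with $\dGY{x}{y}$ just above the cutoff $64/\eps$ and those with $\dGY{x}{y}$ close to the connectivity bound $n-1$---the admissible window $[\ell/(16/\eps-1),\,\ell/(2/\eps+1)]$ for $r_i$ is both nonempty (its endpoint ratio is $\tfrac{16-\eps}{2+\eps}\ge 5>2$, so it contains a value $3\cdot 2^i$) and meets the index range $\{2,\dots,2+\ceil{\log_2 n}\}$. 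The constants $64$, $2$, $16$ and the schedule $r_i=3\cdot 2^i$ over that range are calibrated exactly so this works out; the remainder is bookkeeping.
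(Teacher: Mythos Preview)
Your proof is correct and follows essentially the same approach as the paper: reduce $\WR_1$ to the Euclidean \WSPD via the observation that sets of Euclidean diameter $\le 1$ are cliques in $\G$, and handle $\WR_i$ for $i\ge 2$ by the packing/Voronoi argument of \thmref{WSPD_generic}. The paper's own proof is terser---it simply invokes \thmref{WSPD} and \thmref{WSPD_generic} without re-deriving the inequalities---whereas you spell out the triangle-inequality estimates and, more usefully, verify explicitly that the required resolution index $i$ actually lands in the range $\{2,\ldots,2+\ceil{\log_2 n}\}$, a check the paper omits.
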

\begin{proof}
    The claim follows by applying \thmref{WSPD} to short pairs, and \thmref{WSPD_generic} to long pairs. The detailed argument, provided next, just spells this out, and the reader might (rightly)  skip it.

    If $\{X,Y\} \in \WR_1$, then $\max(\diamX{X},\diamX{Y}) \leq 1$, and thus $X$ and $Y$ are $\tfrac{1}{\eps}$-separated by \thmref{WSPD}, and they each form a clique in $\G$. We then have that
    \begin{equation*}
        \max(\diamY{\G}{X},\diamY{\G}{Y})
        =
        \max(\diamX{X},\diamX{Y})
        \leq
        \frac{\eps}{64} \deY{X}{Y}
        \leq
        \frac{\eps}{64} \dGY{X}{Y}.
    \end{equation*}
    If $\{X,Y\} \in \WR_i$, for $i > 1$, then the argument in \thmref{WSPD_generic} applies in this case.

    If $\dY{x}{y} \leq 32/\eps$, then the pair $\{X,Y\} \in \WR_1$ that separates $x$ and $y$ has the property that, $\max(\diamX{X},\diamX{Y}) \leq (\eps/64)\dY{x}{y} = (\eps/64)\alpha < 1 $. Namely, $\{X,Y\} \in \WR_1 \subseteq \WR$.  Otherwise, if $\dY{x}{y} > 32/\eps$, then $\dGY{x}{y} \geq \dY{x}{y}$, and the argument used in the proof of \thmref{WSPD_generic} applies.~
\end{proof}

\subsubsection{Size analysis in two dimensions}%

\begin{defn}
    For some $r > 0$, let $\GridX{r}$ denote the axis-aligned uniform \emphw{grid}, with each cell being a translated copy of the square $[0,r]^2$. The number $r$ is the \emphi{sidelength} of $\GridX{r}$. The \emphi{cluster} of a cell $\cellA\in \GridX{r}$ is the square $3\cellA$ (i.e., scaling of $\cellA$ by $3$) formed by the $3 \times 3$ contiguous grid cells centered at $\cellA$.
\end{defn}

\begin{defn}
    \deflab{shadow}%
    For a set $X \subseteq \Re^2$, its \emphi{shadow} is the set
    $\shadowX{X} = \Set{ C \in \GridS}{C \cap X \neq \emptyset}$.
\end{defn}

\begin{figure}[t]
    \centerline{\includegraphics{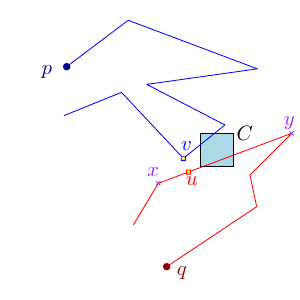}}
    \caption{If two paths intersect the same cell of $\GridX{\tfrac{1}{8}}$, then they can be shortcut.}
    \figlab{obvious}
\end{figure}
\begin{lemma}
    \lemlab{tail_sep}%
    Let $\pi_p$ and $\pi_q$ be two paths in $\G$ with endpoints $p$ and $q$, respectively, such that there is a cell $\cellA \in \shadowX{\pi_p} \cap \shadowX{\pi_q}$.  Then, there is path in $\G$ between $p$ and $q$ of length $\leq \lenX{\pi_p} + \lenX{\pi_q} +1$.
\end{lemma}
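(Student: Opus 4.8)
The plan is to build the short path by walking along $\pi_p$ from $p$ into the common cell $\cellA$, hopping across to $\pi_q$ with a single edge of $\G$, and then walking along $\pi_q$ back to $q$. The only real content is producing that one connecting edge while spending at most $1$ extra unit of length; everything else is bookkeeping.

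Concretely: since $\cellA\in\shadowX{\pi_p}$, I would fix a point $x\in\pi_p\cap\cellA$; it lies on some edge $e_p=aa'$ of $\pi_p$, where $a$ precedes $a'$ in the traversal from $p$ (if $x$ is a vertex, set $a=a'=x$). Symmetrically fix $y\in\pi_q\cap\cellA$ on an edge $e_q=bb'$ of $\pi_q$, with $b$ preceding $b'$ in the traversal from $q$. The candidate path is $\pi_p[p,\hat a]\concat(\hat a\hat b)\concat\pi_q[\hat b,q]$ for a suitable choice of endpoints $\hat a\in\{a,a'\}$ and $\hat b\in\{b,b'\}$. No matter which endpoints are chosen, $\lenX{\pi_p[p,\hat a]}\le\lenX{\pi_p}$ and $\lenX{\pi_q[\hat b,q]}\le\lenX{\pi_q}$, since a prefix of a path is never longer than the whole path. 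So it suffices to find a choice with $\dY{\hat a}{\hat b}\le 1$, i.e. one making $\hat a\hat b$ an edge of $\G$.

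The heart of the argument is therefore the geometric claim that among the four pairs $(a,b),(a,b'),(a',b),(a',b')$ at least one is at Euclidean distance $\le 1$. I would prove it as follows. Since $y\in[b,b']$, one endpoint of $e_q$, call it $\hat b$, satisfies $\dY{\hat b}{y}\le\tfrac12\dY{b}{b'}\le\tfrac12$, hence $\dY{\hat b}{x}\le\tfrac12+\tfrac{\sqrt2}{8}$ because $\operatorname{diam}\cellA=\sqrt2/8$. Now $a$ and $a'$ lie on a line $\ell$ with $x\in[a,a']\subseteq\ell$ and $\dY{a}{a'}\le 1$; parametrizing $\ell$ by arclength, $\dY{z}{\hat b}^2$ is an upward parabola whose value at $x$ is at most $\bigl(\tfrac12+\tfrac{\sqrt2}{8}\bigr)^2$, and a short case analysis on whether the foot of the perpendicular from $\hat b$ to $\ell$ falls inside $[a,a']$ or outside it shows $\min\pth{\dY{a}{\hat b}^2,\dY{a'}{\hat b}^2}\le 2\bigl(\tfrac12+\tfrac{\sqrt2}{8}\bigr)^2<1$. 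Taking $\hat a$ to be the minimizing endpoint gives $\hat a\hat b\in\EG$, and the path displayed above is a walk in $\G$ from $p$ to $q$ of length $\le\lenX{\pi_p}+\lenX{\pi_q}+1$ (a walk contains a path of no greater length).

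The main obstacle is exactly that last estimate. The naive bound $\dY{\hat a}{\hat b}\le\dY{\hat a}{x}+\dY{x}{\hat b}\le\tfrac12+\bigl(\tfrac12+\tfrac{\sqrt2}{8}\bigr)=1+\tfrac{\sqrt2}{8}$ overshoots $1$, so one cannot just route the hop through $x$; the fix is to use that $x$ sits \emph{between} $a$ and $a'$ on the edge, so that squared distance to the fixed point $\hat b$ is convex along $[a,a']$ with a small value at the interior point $x$, which forces one of the two endpoints to be genuinely closer than the triangle inequality suggests. Everything else — locating $x,y$, the prefix-length bounds, and assembling the path — is routine.
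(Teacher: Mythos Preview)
Your argument is correct, but it works harder than the paper's. Rather than fixing arbitrary points $x\in\pi_p\cap\cellA$ and $y\in\pi_q\cap\cellA$, the paper takes the pair $(v,u)$ realizing the \emph{minimum} distance between the two piecewise-linear curves $\pi_p$ and $\pi_q$; since both curves meet $\cellA$, this distance is at most $\diamX{\cellA}\leq 1/2$. The key observation is that for two line segments in the plane the closest pair always includes an endpoint of one of them, so (say) $v$ is already a vertex of $\pi_p$. Then a single triangle inequality finishes: $u$ lies on some edge $xy$ of $\pi_q$, hence within $1/2$ of one endpoint $x$, giving $\dY{v}{x}\leq 1/2+1/2=1$ and $vx\in\EGX{\G}$. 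Your route, having no vertex handed to you for free, must compensate with the parabola/convexity step on $[a,a']$ to pull the overshooting triangle-inequality estimate back below~$1$; the case split on the foot of the perpendicular does yield $\min\bigl(\dY{a}{\hat b}^2,\dY{a'}{\hat b}^2\bigr)\leq \tfrac14+\bigl(\tfrac12+\tfrac{\sqrt2}{8}\bigr)^2<1$, so your bound (and the looser $2D^2$ you quote) is valid. The paper's version is shorter because the closest-pair-at-a-vertex fact replaces your entire convexity computation; your version has the minor virtue of not needing that fact, which itself requires a small side argument when the nearest segments are parallel or intersecting.
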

\begin{proof}
    The shortest distance between $\pi_p$ and $\pi_q$ is realized by two points $v \in \pi_p$ and $u \in \pi_q$. Observe that $\dY{v}{u} = \deY{\pi_p}{\pi_q} \leq \diamX{\cellA} \leq 1/2$.

    It is easy to verify that one of these points must be a vertex, see \figref{obvious}. So, assume that $v$ is a vertex $\pi_p$, and $u$ lies in the interior of a segment $xy$ of $\pi_q$. By definition $\dY{x}{y} \leq 1$, and thus the distance of $u$ to the nearer endpoint, say $x$, of $xy$ is at most $1/2$. Thus, we have $\dY{v}{x} \leq \dY{v}{u} + \dY{u}{x} \leq 1/2 + 1/2 \leq 1$. Namely, $vx \in \EGX{\G}$, and a path from $p$ to $q$ in $\G$ of the stated length exists.
\end{proof}

\begin{lemma}
    \lemlab{no_friends}%
    Let $R \geq r > 8$, and let $\Packing_r$ be an $r$-packing of $\G$, and consider a point $s \in \Packing_r$. Let $U = U(r,R) = \Set{ y \in \Packing_r}{ 0 < \dGY{s}{y} \leq R}$. Then, $\cardin{U} = O( R^2 / r)$ and $\cardin{\Packing_r}= O( n/r)$ (also, $\cardin{U} \leq \cardin{\Packing_r}$).
\end{lemma}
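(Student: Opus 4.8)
The plan is to charge each packing point to a short initial segment --- a ``tail'' --- of a shortest path emanating from it, and then bound the number of tails: by counting vertices for the $O(n/r)$ bound, and by counting cells of $\GridS$ (via \lemref{tail_sep}) for the $O(R^2/r)$ bound. Two elementary facts get used throughout. First, every edge of $\G$ has Euclidean length equal to its weight, so any path in $\G$ is a polygonal curve whose length is at least the Euclidean distance between its endpoints; hence $\dGY{x}{y}\ge \dY{x}{y}$, and in particular $U$ lies in the Euclidean disk of radius $R$ about $s$. Second, a shortest path in $\G$ is simple, every sub-path of it is again a shortest path between its own endpoints, and --- since every edge has length $\le 1$ --- a shortest path of length $L$ passes through at least $L$ distinct vertices.

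For $\cardin{\Packing_r}=O(n/r)$: for each $y\in\Packing_r$ let $\pi_y$ be the sub-path, of a shortest path out of $y$, running from $y$ to the first vertex at $\dG$-distance $\ge r/4$ from $y$ (if every vertex is within $r/4$ of $y$, then $\G$ has diameter $<r/2<r$ and $\cardin{\Packing_r}\le 1$, so we are done). Then $\lenX{\pi_y}\le r/4+1<r/2$, so $\pi_y$ has $\Omega(r)$ vertices, whereas if $\pi_y$ and $\pi_z$ shared a vertex $v$ then $\dGY{y}{z}\le\lenX{\pi_y}+\lenX{\pi_z}<r$, impossible by the packing property. So the $\pi_y$ are vertex-disjoint, giving $\cardin{\Packing_r}\cdot\Omega(r)\le n$. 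Since $U\subseteq\Packing_r$, also $\cardin{U}\le\cardin{\Packing_r}$.

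For $\cardin{U}=O(R^2/r)$: for $y\in U$ fix a shortest $s$--$y$ path (of length $\dGY{s}{y}\le R$) and let $\pi_y$ be its sub-path of length $\min\bigl(\dGY{s}{y},(r-2)/2\bigr)$ ending at $y$; then $\pi_y$ lies in the Euclidean disk of radius $R$ about $s$, which meets only $O(R^2)$ cells of $\GridS$ (as $R>1$). If $\shadowX{\pi_y}$ and $\shadowX{\pi_z}$ met for some $y\ne z$ in $U$, then \lemref{tail_sep} (applied to $\pi_y,\pi_z$ viewed as paths ending at $y$ and $z$) would yield a $\G$-path between $y$ and $z$ of length $\le\lenX{\pi_y}+\lenX{\pi_z}+1\le r-1<r$, contradicting the packing property; hence the shadows $\{\shadowX{\pi_y}\}_{y\in U}$ are pairwise disjoint. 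At most one $y\in U$ has $\lenX{\pi_y}=\dGY{s}{y}<(r-2)/2$ (two such would be within $\dG$-distance $r-2<r$), so apart from this one point every tail has length $\Theta(r)$; if each such tail has shadow of size $\Omega(r)$, then summing over $y\in U$ gives $\cardin{U}\cdot\Omega(r)\le O(R^2)$, i.e.\ $\cardin{U}=O(R^2/r)$.

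The remaining --- and main --- point is the claim that a shortest path $\gamma$ of length $L$ in $\G$ satisfies $\cardin{\shadowX{\gamma}}=\Omega(L)$; equivalently, $\gamma$ cannot spend more than $O(1)$ of its arc-length inside any single cell of $\GridS$. To see this, parametrize $\gamma$ by arc-length, and suppose $\gamma(t_1),\gamma(t_2)$ lie in the same cell $\cellA$ with $t_1<t_2$. Apply \lemref{tail_sep} to two sub-paths of $\gamma$ of length $\le 1$, having $\gamma(t_1)$ and $\gamma(t_2)$ as endpoints respectively and each meeting $\cellA$; this produces a $\G$-path between $\gamma(t_1)$ and $\gamma(t_2)$ of length $O(1)$, and since $\gamma|_{[t_1,t_2]}$ is a shortest $\gamma(t_1)$--$\gamma(t_2)$ path of length $t_2-t_1$, we conclude $t_2-t_1=O(1)$. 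Thus $[0,L]$ is covered by $\cardin{\shadowX{\gamma}}$ sets of diameter $O(1)$, forcing $\cardin{\shadowX{\gamma}}=\Omega(L)$. The delicate part here --- and precisely why \lemref{tail_sep} is phrased for paths rather than for vertices --- is handling points in the interior of edges, so that ``$\gamma|_{[t_1,t_2]}$ is a shortest path between $\gamma(t_1)$ and $\gamma(t_2)$'' and the various distance inequalities go through even when these points are not vertices; the surrounding bookkeeping (the single exceptional tail, the trivial case where $\G$ is very small) is routine.
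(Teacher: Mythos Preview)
Your argument is correct and follows the same skeleton as the paper's: charge each packing point to a tail of a shortest path, show the tails have pairwise disjoint shadows via \lemref{tail_sep}, and show each tail's shadow occupies $\Omega(r)$ cells of $\GridS$. A few differences are worth noting. For $\cardin{\Packing_r}=O(n/r)$ you count \emph{vertices} on disjoint tails rather than grid cells; this is a clean alternative to the paper's route, which instead observes that the union of all shortest paths to $s$ is a subtree of $\G$ with at most $n-1$ edges and hence shadow $O(n)$. For the shadow lower bound $\cardin{\shadowX{\gamma}}=\Omega(L)$, the paper uses the one-line observation that non-consecutive vertices on a shortest path are at Euclidean distance $>1$, so any cell of $\GridS$ (diameter $<1$) contains at most two consecutive vertices; your argument re-invoking \lemref{tail_sep} on interior points $\gamma(t_1),\gamma(t_2)$ works morally but is more involved, and strictly speaking \lemref{tail_sep} as stated takes paths in $\G$ with \emph{vertex} endpoints---you would need to round $t_1,t_2$ to the nearest vertices (at cost $\le 1$ each) to apply it literally. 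Finally, your ``at most one exceptional short tail'' case is in fact vacuous: since $s\in\Packing_r$, every $y\in U$ already satisfies $\dGY{s}{y}\ge r>(r-2)/2$.
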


\begin{proof}
    For any point $x \in U$, let $\pi_x$ be the shortest path from $x$ to $s$ in $\G$.  Let $\Tree$ be the tree formed by $\cup_{x \in U} \pi_x$.  Let $S= \shadowX{\Tree}$ be its \defrefY{shadow}{shadow}, and observe that as $\Tree \subseteq \ballY{x}{R} = \Set{z \in \Re^2}{\dY{s}{z} \leq R}$, we have that $\cardin{S} = O(R^2)$, as each cell of $\GridS$ has area $\Theta(1)$.

    For any $x \in U$, let $\sigma_x$ be the portion of $\pi_x$ starting from $x$ of length $r/2 - 1 > r/3$. By \lemref{tail_sep}, for any $x,y \in U$, we have that $\shadowX{\sigma_x}$ and $\shadowX{\sigma_y}$ are disjoint, as otherwise $\dGY{x}{y} < r$, a contradiction to the packing property.

    Observe that any two non-consecutive vertices of $\pi_x$ must be of length $>1$ from each other. It implies that $\cardin{\shadowX{\sigma_x}} \geq r/c$, for all $x \in U$, for some constant $c$. Thus, $\cardin{U} \leq \cardin{S} / (r/c) =O( R^2 /r )$.

    The second claim follows by considering $R =n$. The total length of the edges of $\Tree$ is $n-1$, as all the edges in $\G$ have length at most $1$ (and thus the shadow of each edge has size $O(1)$), and $\Tree$ is a subtree of $\G$. This implies that $\shadowX{\Tree} = O(n)$, and $\cardin{\Packing_r} = O(n / (r/c))$.
\end{proof}

\begin{lemma}
    \lemlab{resolution}%
    For $i > 1$, we have $\cardin{\WR_i} =O(n/\eps^2)$.
\end{lemma}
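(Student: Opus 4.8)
The plan is to bound $\cardin{\WR_i}$ by a double-counting argument over the packing $\Packing_i$, exactly in the spirit of the standard Euclidean \WSPD size bound, but using \lemref{no_friends} in place of volume arguments. Recall that each pair of $\WR_i$ is determined by an ordered pair $x,y \in \Packing_i$ with $\dGY{x}{y} \in [2r_i/\eps, 16 r_i/\eps]$. So it suffices to show that for each fixed $x \in \Packing_i$, the number of $y \in \Packing_i$ with $\dGY{x}{y} \leq 16 r_i/\eps$ is $O(1/\eps^2)$, and then sum over $x \in \Packing_i$, using that $\cardin{\Packing_i} = O(n/r_i)$.

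First I would apply \lemref{no_friends} with $r = r_i/2$ and $R = 16 r_i/\eps$. We need the hypothesis $R \geq r > 8$; since $r_i = 3 \cdot 2^i$ with $i \geq 2$ we have $r_i/2 \geq 6$, which is not quite $> 8$ — so I would either adjust the starting index or note $r_i \geq 12$ for $i \geq 2$, giving $r_i/2 \geq 6$; this is the one place I need to be slightly careful, and the cleanest fix is to observe that $\WR_i$ for the smallest one or two values of $i$ contributes only $O(n)$ pairs trivially (each $\Packing_i$ has $O(n)$ elements and the degree bound is $O((R/r)^2)$ which is still $O(1/\eps^2)$ even for constant $r$), or simply to use $r_i = c \cdot 2^i$ with a large enough constant $c$, which does not affect anything downstream. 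Assuming the hypothesis holds, \lemref{no_friends} gives
\begin{equation*}
    \cardin{U(r_i/2,\, 16 r_i/\eps)} = O\!\pth{ \frac{(16 r_i/\eps)^2}{r_i/2} } = O\!\pth{ \frac{r_i}{\eps^2} }.
\end{equation*}
Thus each $x \in \Packing_i$ participates in $O(r_i/\eps^2)$ pairs of $\WR_i$, and since every pair of $\WR_i$ has one of its two generating points equal to some such $x$, we get $\cardin{\WR_i} = O\pth{ \cardin{\Packing_i} \cdot r_i/\eps^2 }$.

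Finally, I would plug in $\cardin{\Packing_i} = O(n / (r_i/2)) = O(n/r_i)$, again from \lemref{no_friends}, to obtain
\begin{equation*}
    \cardin{\WR_i} = O\!\pth{ \frac{n}{r_i} \cdot \frac{r_i}{\eps^2} } = O\!\pth{ \frac{n}{\eps^2} },
\end{equation*}
as claimed. The main obstacle is really just the bookkeeping: making sure the two applications of \lemref{no_friends} have the correct parameters (the degree bound uses $r = r_i/2$ so that the $\riX{i}/2$-packing is an $r$-packing in its sense, and the separation distance range $[2r_i/\eps, 16r_i/\eps]$ dictates $R = 16 r_i/\eps$), and handling the boundary requirement $r > 8$ for the first one or two resolutions — neither of which presents a genuine difficulty.
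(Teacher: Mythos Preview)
Your proposal is correct and follows essentially the same argument as the paper: bound the degree of each point of $\Packing_i$ by $O(R_i^2/r_i) = O(r_i/\eps^2)$ via \lemref{no_friends}, multiply by $\cardin{\Packing_i} = O(n/r_i)$, and conclude. The paper's proof is terser and does not fuss over the $r>8$ hypothesis you flagged; your handling of that boundary case is fine and not a real issue.
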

\begin{proof}%
    The pairs of $\WR_i$ are induced by pairs of points of $\Packing_i$ that are at most $R_i = O(r_i/\eps)$ apart. \lemref{no_friends} states that $\cardin{\Packing_i} = O(n /r_i)$, and each element of $\Packing_i$ can be matched to at most $d_i = O(R_i^2/r_i) = O(r_i/ \eps^2 )$ other points of $\Packing_i$. Thus, we have $\cardin{\WR_i} \leq \cardin{\Packing_i} d_i = O(n/\eps^2)$.
\end{proof}

\begin{theorem}
    \thmlab{wspd_u_d}%
    For a set $\P$ of $n$ points in the plane, such that the weighted unit-distance graph they induce $\G=(\P,\EG)$ is connected, the above construction results in a $\tfrac{1}{\eps}$-\WSPD of $\G$ of size $O(\eps^{-2} n \log n)$.
\end{theorem}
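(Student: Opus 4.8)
The plan is to simply sum up the sizes of all the pieces of the decomposition $\WR = \WR_1 \cup \bigcup_{i>1}\WR_i$. First I would bound $\cardin{\WR_1}$: this is obtained by applying \thmref{WSPD} with separation parameter $64/\eps$ to the $n$ points of $\P$ in the plane ($d=2$), which yields a Euclidean $\tfrac{64}{\eps}$-\WSPD of size $O(n/\eps^2)$; discarding pairs of large Euclidean diameter only decreases the size, so $\cardin{\WR_1} = O(n/\eps^2)$. The bulk of the work is already done: \lemref{resolution} gives $\cardin{\WR_i} = O(n/\eps^2)$ for each $i>1$.

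Next I would count the number of resolutions. By construction $i$ ranges over $2,\ldots,2+\ceil{\log_2 n}$, so there are $O(\log n)$ distinct values of $i$. Summing, $\cardin{\WR} \leq \cardin{\WR_1} + \sum_{i=2}^{2+\ceil{\log_2 n}} \cardin{\WR_i} = O(n/\eps^2) + O(\log n)\cdot O(n/\eps^2) = O(\eps^{-2} n \log n)$.

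Finally I would invoke \clmref{sep} to confirm correctness: it establishes both that every pair in $\WR$ is $\tfrac{1}{\eps}$-separated in $\G$ and that every distinct pair $x,y \in \P$ is covered by some pair of $\WR$, so $\WR$ is indeed a $\tfrac{1}{\eps}$-\WSPD of $\G$ (the pair-decomposition conditions (I) and (II) being immediate from the construction, since each $A_i,B_i$ is a nonempty Voronoi cell or a Euclidean \WSPD block, and the two cells in a pair are distinct). Combining with the size bound completes the proof.

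The only mild subtlety — hardly an obstacle — is making sure the restriction on resolutions to $i \le 2 + \ceil{\log_2 n}$ does not lose any long pairs: for $x,y$ with $\dGY{x}{y} \le n-1$ (the diameter bound for a connected unit-distance graph on $n$ vertices), the relevant scale $r_i \asymp \eps\,\dGY{x}{y}$ stays within the stated range, so no long pair is missed; this is exactly what \clmref{sep} already guarantees, so no further argument is needed here.
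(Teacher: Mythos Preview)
Your proposal is correct and follows essentially the same approach as the paper: sum the $O(\log n)$ resolution bounds from \lemref{resolution} and invoke \clmref{sep} for correctness. You actually add a little more detail than the paper does (explicitly bounding $\cardin{\WR_1}$ via \thmref{WSPD} and commenting on why the resolution range suffices), but the argument is the same.
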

\begin{proof}
    Summing over the $O( \log n)$ resolutions (the maximum distance in $\G$ is at most $n-1$, so no need to consider higher resolutions), and using \lemref{resolution}, implies the bound on the size of the \WSPD. \clmref{sep} implies the correctness.
\end{proof}

\begin{remark}
    \begin{compactenumA}
        \item \thmref{wspd_u_d} improves over the bound of Gao and Zhang \cite{gz-wspdu-05} by a factor of $1/\eps^2$. The question whether the $O(\log n)$ should be in the bound is still open. The new construction and argument is (arguably) simpler.

        \smallskip%
        \item The computed \WSPD is a cover -- it might cover a single pair of points several times. The classical construction of the \WSPD of points in $\Re^d$ is a disjoint cover of all pairs. Being a cover does not seem to matter for most \WSPD applications.

        \smallskip%
        \item We made no effort to optimize the constants and did not bother to describe an efficient construction algorithm. It seems self-evident that using standard techniques should lead to a near-linear time construction algorithm, as was done by Gao and Zhang \cite{gz-wspdu-05}.
    \end{compactenumA}
\end{remark}

\subsection{Higher dimensions}

\begin{lemma}
    \lemlab{higher_dim}%
    Let $\P$ be a set of $n$ points in $\Re^d$, for $d>2$, and let $\G$ be the weighted unit-distance graph induced on $\P$. Then, for any $\eps \in (0,1)$, the graph $\G$ has a $\tfrac{1}{\eps}$-\WSPD of size $O(n^{2-2/d}/\eps^2)$.
\end{lemma}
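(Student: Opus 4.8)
The plan is to run the construction of \secref{alg} unchanged, but in $\Re^d$ (assuming, as there, that $\G$ is connected): take $\WR_1$ to be the $\tfrac{64}{\eps}$-\WSPD of $\P$ from \thmref{WSPD} with all pairs of Euclidean diameter exceeding $1$ discarded, and for $i=2,\ldots,2+\ceil{\log_2 n}$ set $r_i = 3\cdot 2^i$, take an $r_i/2$-packing $\Packing_i$ of $\G$, and let $\WR_i=\{\{\vorZ{\P}{\Packing_i}{x},\vorZ{\P}{\Packing_i}{y}\} : x,y\in\Packing_i,\ \dGY{x}{y}\in[2r_i/\eps,\,16r_i/\eps]\}$; put $\WR=\bigcup_i\WR_i$. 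That $\WR$ is a $\tfrac{1}{\eps}$-\WSPD of $\G$ is exactly \clmref{sep}, whose proof uses only \thmref{WSPD} in $\Re^d$ together with the dimension-free separation and covering arguments of \thmref{WSPD_generic} (and $\dGY{x}{y}\ge\dY{x}{y}$, which forces every not-too-short pair into some $\WR_i$ with $i>1$, since the ranges $[2r_i/\eps,16r_i/\eps]$ cover all distances up to the graph diameter $n-1$). So the whole task is to bound $\cardin{\WR}$.

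The geometric inputs that must be redone in $\Re^d$ are \lemref{tail_sep}, and through it \lemref{no_friends} and \lemref{resolution}. Here \lemref{tail_sep} does \emph{not} transfer as stated: its proof uses that the closest pair of points realizing the distance between two subpaths of $\G$ has a graph vertex as one endpoint, which fails once $d\ge 3$ (two skew edges have a unique interior--interior closest pair). I would avoid this by passing to \emph{vertex}-shadows. Fix the grid $\GridX{\delta}$ with $\delta=\tfrac{1}{2\sqrt d}$, so every cell has diameter $\tfrac12<1$, and for a path $\tau$ in $\G$ let $\shadowX{\tau}$ be the set of cells that contain a vertex of $\tau$. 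Then any cell contains at most two vertices of a given shortest path (two such vertices are at Euclidean distance $<1$, hence must be consecutive, else the chord between them shortcuts the path); and if a cell contains a vertex $u$ of one path and a vertex $u'$ of another, then $\dY{u}{u'}<1$, so $uu'\in\EGX{\G}$.

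Using this, the $\Re^d$ analogue of \lemref{no_friends} goes through: for an $r$-packing $\Packing_r$ of $\G$ with $r>8$, a point $s\in\Packing_r$, and $U=\{y\in\Packing_r:0<\dGY{s}{y}\le R\}$, one gets $\cardin{U}=O(\min(n,R^d)/r)$ and $\cardin{\Packing_r}=O(n/r)$, the constants depending on $d$. Indeed, let $\Tree=\bigcup_{y\in U}\pi_y$ be the union of the shortest $s$--$y$ paths; all its vertices lie in $\ballY{s}{R}$, so a volume count gives $\cardin{\shadowX{\Tree}}=O(R^d)$, and trivially $\cardin{\shadowX{\Tree}}\le n$. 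For $y\in U$ let $\sigma_y$ be the suffix of $\pi_y$ of arc-length $r/2-1$ ending at $y$; it has $\Omega(r)$ vertices, so by the two-per-cell bound $\cardin{\shadowX{\sigma_y}}=\Omega(r)$, and for distinct $x,y\in U$ the sets $\shadowX{\sigma_x}$ and $\shadowX{\sigma_y}$ are disjoint: a shared cell would give an edge between a vertex of $\sigma_x$ and a vertex of $\sigma_y$, hence an $x$--$y$ walk of length $<(r/2-1)+1+(r/2-1)=r-1$, contradicting $\dGY{x}{y}\ge r$. Summing these $\Omega(r)$ disjoint contributions yields $\cardin{U}=O(\min(n,R^d)/r)$; taking $R=n$ gives $\cardin{\Packing_r}=O(n/r)$.

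For the size, mirror \lemref{resolution}: for $i\ge 3$ (so $r_i/2>8$) the pairs of $\WR_i$ are induced by points of $\Packing_i$ within graph distance $R_i=O(r_i/\eps)$, so $\cardin{\WR_i}\le\cardin{\Packing_i}\cdot O(\min(n,R_i^d)/r_i)=O\pth{n\min(n,(r_i/\eps)^d)/r_i^2}$; the single remaining resolution $i=2$ is bounded crudely by $\cardin{\WR_2}=O(\min(n^2,n/\eps^d))$, using $\cardin{\Packing_2}\le n$ together with the fact that points of $\Packing_2$ are pairwise at Euclidean distance $>1$ (close points would be joined by an edge), so only $O((1/\eps)^d)$ of them lie within graph (hence Euclidean) distance $O(1/\eps)$ of any fixed point. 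Splitting the remaining sum at $r_i\approx\rho:=\eps n^{1/d}$: over $r_i\gtrsim\rho$ the term is $O(n^2/r_i^2)$, a geometric series summing to $O(n^2/\rho^2)=O(n^{2-2/d}/\eps^2)$; over $r_i\lesssim\rho$ the term is $O(n\eps^{-d}r_i^{d-2})$, which (as $d>2$) is geometrically increasing and sums to $O(n\eps^{-d}\rho^{d-2})=O(n^{2-2/d}/\eps^2)$. Finally $\cardin{\WR_1}=O(\min(n/\eps^d,n^2))$ (the bound of \thmref{WSPD}, and the trivial $\binom{n}{2}$ bound), and $\min(n/\eps^d,n^2)=O(n^{2-2/d}/\eps^2)$ in both regimes $\eps\ge n^{-1/d}$ and $\eps\le n^{-1/d}$. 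Adding everything gives $\cardin{\WR}=O(n^{2-2/d}/\eps^2)$. The main obstacle is the failure of \lemref{tail_sep} in dimension $\ge 3$ and the resulting need to rebuild the packing estimate around vertex-shadows; the summation afterwards is a routine two-regime bookkeeping.
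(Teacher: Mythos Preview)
Your proof is correct and follows the paper's approach: the same construction, the same $O(R^d/r)$ neighbor bound for the packing, and the same two-regime geometric summation split at $r_i\approx\eps\, n^{1/d}$ (the paper phrases the split as ``$d_j\le\cardin{\Packing_j}$'', which is the same threshold). You are in fact more careful than the paper, which simply asserts the $d$-dimensional analogue of \lemref{no_friends} without revisiting \lemref{tail_sep} --- whose ``one of the two closest points must be a vertex'' step fails for skew segments once $d\ge 3$; your switch to vertex-shadows is a clean fix (an alternative is to keep edge-shadows and accept a $+2$ instead of $+1$ additive term, which only perturbs constants), and your explicit treatment of $\WR_1$, $\WR_2$, and the trivial $n^2$ cap fills in bookkeeping the paper leaves implicit.
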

\begin{proof}
    The construction is the same as in the 2d case. For $d >2$, the bound of \lemref{no_friends} becomes $\cardin{U(r,R)} = O( R^d/r)$.  Thus, the maximum degree of a point in the graph induced on $\Packing_j$, if we interpret a pair of $\WR_j$ as an edge between points of $\Packing_j$, is at most
    \begin{equation*}
        d_j = \Theta( r_j^{d-1}/\eps^d ).
    \end{equation*}
    Thus, we have $\beta_j = \cardin{\WR_j} = O( d_j \cardin{\Packing_j} ) = O(n r_j^{d-2}/\eps^d)$, as $\cardin{\Packing_j} = O(n /r_j)$.  The quantities $\beta_j$ are geometrically growing with $j$, as $r_j = O(2^j)$. Fortunately, we must have $d_j \leq \cardin{\Packing_j}$. This implies, for the maximum such index $i$, for some constant $c$, the inequality
    \begin{equation*}
        \frac{r_i^{d-1}}{\eps^d }
        \leq
        c \frac{n}{r_i}
        \implies
        r_i^{d}
        \leq
        c
        n \eps^d
        \implies
        r_i = O( \eps n^{1/d}).
    \end{equation*}
    This implies that $\beta_i = \cardin{\WR_i} = O(n^{2-2/d}/\eps^2)$. Thus, as mentioned above, the sum $S = \sum_{j=1}^i \beta_j$ behaves as a geometric sum dominated by the last term (i.e., $S = O( \beta_i)$).

    Finally, observe that $\beta_i$ also dominates the sum $\sum_{j=i+1}^{\log_2 n} \beta_j$, as this is now a decreasing geometric sum with $\cardin{\Packing_j}$ shrinking by a factor of $2$ between consecutive terms (the bound on $\beta_j$ would shrink even faster).
\end{proof}

A nifty (but not too difficult) exercise is to show that the bound of \lemref{higher_dim} is tight, and we leave it as an exercise to the interested reader\footnote{The lazy reader can check out Gao and Zhang \cite{gz-wspdu-05} who provide such an example.}.

\section{Approximating the distortion of a finite map in low dimensions}
\seclab{distortion}

\subsection{Background: Approximating the dilation}

Let $\FMS = (\Ground, \DistChar )$ and
$\FMSB = (\GroundB, \DistCharB )$ be two metric spaces.

\begin{defn}
    \deflab{dilation}%
    Consider a mapping $f: \Ground \to \GroundB$. For $C > 0$, the
    function $f$ is \emphi{$C$-Lipschitz} if
    \begin{equation*}
        \max_{x, y \in \Ground}
        \frac{\dmBY{f(x)}{ f(y)}}
        {\dmY{x}{ y}}
        \leq C.
    \end{equation*}
    The minimum $C$ such that $f$ is $C$-Lipschitz is the
    \emphi{dilation} of $f$ (or the Lipschitz norm of $f$), denoted by
    $\LipX{f}$.
\end{defn}

The dilation can be approximated in near-linear time  if the mapping is
from a point-set that has a small \WSPD decomposition.

\paragraph{Sketch of algorithm \cite{ns-asfeg-00, hm-fcnld-06}.}
Let $P = \{ p_1, \ldots, p_n\}$ be a set of $n$ points in $\Re^d$, and let $f:P \rightarrow (\Ground,\DistChar)$ be a mapping, where $(\Ground,\DistChar)$ is a metric space.  As a reminder, the {dilation} of $f$ is
\begin{math}
    \LipX{f}%
    =%
    \max_{i\neq j}
    \frac{\dmY{f(x_i)}{ f(x_j)}}
    {\dY{x_i}{ x_j}}.
\end{math}
To this end, compute a $\tfrac{8}{\eps}$-\WSPD $\WR$ of $\P$. For each pair $(A, B) \in \WR$, let $\repX{A} \in A$ and $\repX{B} \in B$ be arbitrary representatives of these sets. The algorithm computes and returns the following quantity as the approximate dilation:
\begin{math}
    \ell = \max_{(A,B) \in \WR}
    \frac{\dmY{f(\repX{A})}{ f(\repX{B})}}
    {\dY{\repX{A}}{ \repX{B}}}.
\end{math}

\begin{lemma}[\cite{ns-asfeg-00, hm-fcnld-06}]
    \lemlab{lip}%
    Let $P$ be a set of $n$ points in a metric space, where one can compute a $\tfrac{1}{\eps}$-\WSPD of $\P$ in $O( T(n) )$ time, with $\eps \in (0,1)$ being a prespecified parameter. Let $f:P \rightarrow \FMS$ be a given mapping, where $\FMS$ is some metric space. Then, one can compute in $O(T(n))$ time, a quantity $\ell \in \Re$, such that $(1-\eps)\LipC \leq \ell \leq \LipC$, where $\LipC = \LipX{f}$.  Alternatively, one can output a quantity $\ell'$ such that $\LipC \leq \ell' \leq (1+\eps)\LipC$.
\end{lemma}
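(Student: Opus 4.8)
The plan is to analyze the algorithm sketched immediately above. First I would fix notation: set $\eps' = \eps/8$, compute a $\tfrac{1}{\eps'}$-\WSPD $\WR = \brc{\brc{A_1,B_1},\ldots,\brc{A_s,B_s}}$ of $\P$ (this is the $\tfrac{8}{\eps}$-\WSPD in the sketch), pick arbitrary representatives $\repX{A}\in A$ and $\repX{B}\in B$ in each pair, and set $\ell = \max_{\brc{A,B}\in\WR}\dmY{f(\repX{A})}{f(\repX{B})}/\dY{\repX{A}}{\repX{B}}$. Two points I would record up front: since $A\cap B=\emptyset$ in a pair decomposition, $\repX{A}\neq\repX{B}$ and $\dY{\repX{A}}{\repX{B}}>0$, so $\ell$ is well defined; and the defining inequality of $\tfrac{1}{\eps'}$-separation gives $\max(\diamX{A},\diamX{B})\le\eps'\,\dY{A}{B}$.

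Next I would prove the two inequalities. The bound $\ell\le\LipC$ is immediate: each ratio appearing in the maximum defining $\ell$ is (with $x_i=\repX{A}$, $x_j=\repX{B}$, $i\neq j$) one of the ratios $\dmY{f(x_i)}{f(x_j)}/\dY{x_i}{x_j}$ appearing in the maximum defining $\LipX{f}$. For the lower bound, let $i\neq j$ attain $\LipC=\dmY{f(x_i)}{f(x_j)}/\dY{x_i}{x_j}$, and let $\brc{A,B}\in\WR$ be a pair with $x_i\in A$ and $x_j\in B$, which exists because $\WR$ covers all pairs. Since $\dY{A}{B}\le\dY{x_i}{x_j}$, I get $\dY{\repX{A}}{x_i}\le\diamX{A}\le\eps'\,\dY{x_i}{x_j}$ and likewise for $B$; the triangle inequality then yields $\dY{\repX{A}}{\repX{B}}\le(1+2\eps')\dY{x_i}{x_j}$. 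On the image side, $f$ is $\LipC$-Lipschitz, so $\dmY{f(\repX{A})}{f(x_i)}\le\LipC\,\dY{\repX{A}}{x_i}\le\LipC\eps'\,\dY{x_i}{x_j}$ and likewise for $B$, whence $\dmY{f(\repX{A})}{f(\repX{B})}\ge\dmY{f(x_i)}{f(x_j)}-2\LipC\eps'\,\dY{x_i}{x_j}=(1-2\eps')\LipC\,\dY{x_i}{x_j}$. Dividing, $\ell\ge\tfrac{1-2\eps'}{1+2\eps'}\LipC\ge(1-\eps)\LipC$, using $\eps'=\eps/8$. The alternative statement follows by instead returning $\ell'=\tfrac{1+2\eps'}{1-2\eps'}\,\ell$, for which the same two inequalities give $\LipC\le\ell'\le(1+\eps)\LipC$.

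Finally, for the running time I would invoke the hypothesis with parameter $\eps/8$ in place of $\eps$ --- which inflates the promised bound by only a constant factor --- to get $\WR$ in $O(T(n))$ time with $\cardin{\WR}=O(T(n))$, and then note that evaluating $\ell$ costs $O(\cardin{\WR})$ further time (one evaluation of $\DistChar$ per pair). I do not expect a real obstacle; the two things to be careful about are (i) tying $\diamX{A}$ and $\diamX{B}$ to the specific distance $\dY{x_i}{x_j}$ realizing the dilation, via $\dY{A}{B}\le\dY{x_i}{x_j}$, so the perturbations on the domain and image sides are controlled \emph{simultaneously}, and (ii) choosing the separation constant ($8/\eps$) generously enough that $\tfrac{1-2\eps'}{1+2\eps'}\ge 1-\eps$ and $\tfrac{1+2\eps'}{1-2\eps'}\le 1+\eps$ for all $\eps\in(0,1)$; both reduce to the elementary estimate $\tfrac{1-t}{1+t}\ge 1-2t$ with $t=\eps/4$.
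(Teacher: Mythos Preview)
Your proposal is correct and follows exactly the approach the paper sketches immediately before the lemma (the paper does not give its own proof, deferring to \cite{ns-asfeg-00, hm-fcnld-06}); you simply supply the triangle-inequality details showing $\ell\ge\tfrac{1-2\eps'}{1+2\eps'}\LipC$ and verify the constants. One small quibble: the upper estimate $\tfrac{1+2\eps'}{1-2\eps'}\le 1+\eps$ does not literally reduce to $\tfrac{1-t}{1+t}\ge 1-2t$ as you say, but to the equally elementary $\tfrac{1+t}{1-t}\le 1+4t$ for $t=\eps/4\in(0,1/4)$, which holds since $2t(1-2t)\ge 0$.
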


\subsection{Approximating the distortion}

\begin{defn}
    For $K \geq 1$, the mapping $f$ is \emphi {$K$-bi-Lipschitz} if
    there exists a constant $\constA > 0$ such that
    \begin{equation}
        \constA K^{-1} \cdot \dmY{ x}{ y}%
        \leq%
        \dmBY{\bigl.f(x)}{ f(y)}%
        \leq%
        \constA \cdot \dmY{x}{y},%
        \eqlab{distortion}%
    \end{equation}
    for all $x,y \in X$.  The least $K$ for which $f$ is $K$-bi-Lipschitz is the \emphi{distortion} of $f$ and is denoted $\Distor(f)$.
\end{defn}

\begin{lemma_no_dot} \textbf{\emph{\cite[p. 356]{m-ldg-02}.}} %
    If $f$ is a bijection, then
    \begin{math}
        \displaystyle \Distor(f) = \LipX{f} \cdot
        \LipX{\smash{f^{-1}}}.
    \end{math}
\end{lemma_no_dot}

\begin{theorem}
    let $P$ be a set of $n$ points in $\Re^d$, $\eps \in (0,1)$ be a parameter, and let $f:P \rightarrow \Re^k$ be a given mapping, where both $d$ and $k$ are constants. Then, one can compute in $O(n \log n + n/\eps^{\max(d,k)})$ time, a quantity $\ell \in \Re$, such that $\Distor(f) \leq \ell \leq (1+\eps) \Distor(f)$.
\end{theorem}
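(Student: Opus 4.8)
The plan is to reduce to the dilation estimator of \lemref{lip} through the identity $\Distor(f) = \LipX{f}\cdot\LipX{\smash{f^{-1}}}$ quoted above. Since the very notion of distortion (being $K$-bi-Lipschitz, \Eqref{distortion}) presupposes that $f$ is injective, $f$ is a bijection onto $P' = f(P) \subseteq \Re^k$, the inverse map $f^{-1} : P' \to \Re^d$ is well defined, and $\Distor(f)$ is finite; hence the Matoušek identity is applicable. So it suffices to approximate each of the two dilations $\LipX{f}$ and $\LipX{\smash{f^{-1}}}$ from above to within a $(1+\eps')$ factor, with $\eps' = \eps/3$, and multiply.

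First I would set $\eps' = \eps/3$ and apply \thmref{WSPD} to $P \subseteq \Re^d$ to obtain a $\tfrac{1}{\eps'}$-\WSPD of $P$ in $O(n\log n + n/\eps'^{\,d})$ time, then invoke \lemref{lip} (the variant producing $\ell'$ with $\LipC \le \ell' \le (1+\eps')\LipC$) on the mapping $f : P \to \Re^k$ to get a value $\ell_1$ with $\LipX{f} \le \ell_1 \le (1+\eps')\LipX{f}$. Symmetrically, compute $P' = f(P)$ explicitly (trivially in $O(n)$ time), apply \thmref{WSPD} to $P' \subseteq \Re^k$ in $O(n\log n + n/\eps'^{\,k})$ time to get a $\tfrac{1}{\eps'}$-\WSPD of $P'$, and invoke \lemref{lip} on $f^{-1} : P' \to \Re^d$ to obtain $\ell_2$ with $\LipX{\smash{f^{-1}}} \le \ell_2 \le (1+\eps')\LipX{\smash{f^{-1}}}$. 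The algorithm returns $\ell = \ell_1 \ell_2$.

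Correctness is then immediate from the identity: $\Distor(f) = \LipX{f}\LipX{\smash{f^{-1}}} \le \ell_1 \ell_2 = \ell \le (1+\eps')^2\,\LipX{f}\LipX{\smash{f^{-1}}} \le (1+\eps)\,\Distor(f)$, using $(1+\eps/3)^2 \le 1+\eps$ for $\eps \in (0,1)$. The running time is $O(n\log n + n/\eps'^{\,d} + n/\eps'^{\,k}) = O(n\log n + n/\eps^{\max(d,k)})$, since $d$ and $k$ are constants and the reparametrization $\eps' = \eps/3$ costs only a constant factor. The only points that need a moment of care — and none are hard — are: (i) that $f$, being injective, is a bijection onto $P'$, so the Matoušek identity may be used; (ii) that we must take the \emph{same-direction} (upper) one-sided guarantee from \lemref{lip} for both factors, so that the product is itself an upper $(1+\eps')^2$-approximation of $\Distor(f)$ rather than merely a two-sided one; and (iii) that \lemref{lip} applies to $f^{-1}$ because its domain $P'$ lies in $\Re^k$ and hence admits a $\tfrac{1}{\eps'}$-\WSPD computable within the stated bound via \thmref{WSPD}. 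There is no substantial mathematical obstacle here, in keeping with the paper's remark that this is an easy consequence of known results; the "main" (minor) difficulty is just the bookkeeping of $\eps$ versus $\eps'$ and of the one-sidedness.
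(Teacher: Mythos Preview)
Your proposal is correct and essentially identical to the paper's proof: both approximate $\LipX{f}$ and $\LipX{\smash{f^{-1}}}$ from above via \lemref{lip} (using a \WSPD on $P\subseteq\Re^d$ and on $f(P)\subseteq\Re^k$, respectively) and return the product, invoking the identity $\Distor(f)=\LipX{f}\cdot\LipX{\smash{f^{-1}}}$. The only cosmetic differences are that the paper uses $\eps/4$ where you use $\eps/3$, and that the paper adds a one-line remark handling the degenerate case where $f$ is not injective (so $\Distor(f)=\infty$), which you instead fold into the hypothesis.
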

\begin{proof}
    Let $P = \{ p_1, \ldots, p_n\}$, and let $q_i = f(p_i)$, for $i=1, \ldots, n$. Assume that $f$ is one-to-one for now, which implies that $f^{-1}$ is well defined. We $(1+\eps/4)$-approximate $\LipX{f}$ and $\LipX{f^{-1}}$ (from above) using the algorithm of \lemref{lip} on $f$ and $f^{-1}$, respectively.  The product of the two approximations is the desired quantity, by \Eqref{distortion}.

    If $f$ is not one-to-one, then two points of $f(P)$ are the same, and the distortion is infinite. This can be discovered in $O( n )$ time by computing the closest pair in the multisets $P$ and $f(P)$.
\end{proof}

\section{Computing a simple  subcurve via shortcutting}

\subsection{Computing a simple subcurve in the plane}
\seclab{simple_subcurve}

Here, we study the following problem.
\begin{problem}
    Let $\curve \subseteq \Re^2$ be a polygonal curve with $n$ segments, which are self-intersecting. Compute a simple subcurve $\curve' \subseteq \curve$ with the same endpoints as $\curve$.
\end{problem}

Assume $\curve$ has $u$ and $v$ as its endpoints, and it is oriented
from $u$ to $v$.

\paragraph{Algorithm.}

Let \algDistill{}$(\curve)$ denote the recursive algorithm described here. If $\mX{\curve} \leq 2$, then \algDistill simply returns $\curve$ as the desired untangled curve, where $\mX{\curve}$ denotes the number of edges of $\curve$\footnote{%
For simplicity, we ignore ``tedious'' degenerate cases. For example, here, the case that the two edges of $\curve$ are collinear.%
}.

Next \algDistill splits $\curve$ in its median vertex, into two curves $\crA, \crB$, such that $\curve = \crA \concat \crB$, where $\concat$ denotes the concatenation of two curves. Here, $\mX{\crA},\mX{\crB} \leq \ceil{\mX{\curve}/2}$.

Now, the algorithm recursively computes $\craA \leftarrow \algDistill(\crA)$ and $\craB \leftarrow \algDistill(\crB)$. As $\craA$ and $\craB$ are simple, it is natural to return their concatenation as the desired curve, but unfortunately, this merged curve might self-intersect. To avoid this, we compute the first intersection point along $\craA$ with $\craB$ (such an intersection always exists as $\craA$ and $\craB$ share an endpoint).

To this end, we build the ray-shooting data-structure of Hershberger and Suri \cite{hs-pars-95} on $\craB$. Then, for each segment of $\craA{}$, in the order along $\craA$, we query whether this segment intersects $\craB$ by extending the corresponding ray and checking whether the intersection occurs within the given segment. Let $s$ be the first segment of $\craA$ found to intersect $\craB$.

The algorithm computes the intersection points of $\craB$ along $s$ and the first such intersection along $s$ (according to the orientation of $\craA$), denoted by $p$. The subcurve $\craA[u:p]$, the portion of $\crA$ from its start point $u$ to $p$, does not intersect any portion of $\craB$. Thus, if $\craA[u:p] \concat \craB[p:v]$ is the desired simple curve, where $u$ and $v$ are the two endpoints of $\curve$.

\begin{lemma}
    \lemlab{simple_sub_curve}%
    Given a polygonal curve $\curve$ with $n$ segments in the plane, the above algorithm computes a simple subcurve of $\curve$, sharing the same endpoints, in $O(n \log^2 n)$ time.
\end{lemma}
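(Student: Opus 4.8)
\medskip\noindent\textbf{Proof plan.}\quad
The plan is to prove correctness by induction on the number of segments $m=\mX{\curve}$, carrying the invariant that \algDistill returns a \emph{simple} subcurve of its input that shares the same ordered pair of endpoints; the time bound will then follow from the divide-and-conquer recurrence.

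For the base case $m\le 2$, the curve is already simple (setting aside collinear overlaps, as in the paper's footnote), and \algDistill returns it unchanged. For the inductive step, split $\curve=\crA\concat\crB$ at the median vertex, so $\mX{\crA},\mX{\crB}\le\ceil{m/2}<m$, and put $\craA=\algDistill(\crA)$ and $\craB=\algDistill(\crB)$. By the inductive hypothesis these are simple subcurves of $\crA$ and $\crB$ with unchanged endpoints; in particular the terminal vertex of $\craA$ equals the initial vertex of $\craB$, so $\craA$ and $\craB$ share at least that point and the first intersection $p$ of $\craA$ with $\craB$ is well defined (and occurs exactly once on the simple curve $\craB$, so $\craB[p:v]$ is unambiguous). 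The crux is to check that $\curve'=\craA[u:p]\concat\craB[p:v]$ is simple: by the choice of $p$ (first over the segments of $\craA$ in order, then first along the intersected segment $s$), the sub-piece $\craA[u:p]$ meets $\craB$, and hence also $\craB[p:v]\subseteq\craB$, only in the single point $p$; since $\craA[u:p]$ and $\craB[p:v]$ are themselves simple (being subcurves of the simple curves $\craA$ and $\craB$), concatenating them at their lone common point $p$ yields a simple curve. Finally $\curve'$ starts at $u$, ends at $v$, and satisfies $\curve'\subseteq\craA\cup\craB\subseteq\crA\cup\crB=\curve$, which closes the induction.

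For the running time, let $T(m)$ be the cost on an $m$-segment curve. After the two recursive calls, the merge step builds the ray-shooting data structure of Hershberger and Suri \cite{hs-pars-95} on the simple curve $\craB$ (which has $O(m)$ segments) in $O(m\log m)$ time, and then scans the segments of $\craA$ in order, spending $O(\log m)$ per segment on a single ray-shooting query along its supporting line: if the reported first hit falls within the segment, that segment is $s$ and the hit is $p$ and we stop, otherwise the segment misses $\craB$ and we continue. Hence the merge costs $O(m\log m)$, giving $T(m)=2\,T(\ceil{m/2})+O(m\log m)=O(m\log^2 m)$; taking $m=n$ yields the claimed $O(n\log^2 n)$.

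The step I expect to be the main obstacle is making the merge argument airtight: one must argue carefully that ``$p$ is the first intersection along $\craA$'' really does force $\craA[u:p]$ to be disjoint from $\craB$ except at $p$ — so that two simple arcs are being glued at exactly one point — and, on the implementation side, that a ray-shooting query (which reports the first hit of a \emph{ray}, not of a segment) correctly detects whether a given segment of $\craA$ crosses $\craB$ and, when it does, returns the first crossing along it: a first ray-hit lying beyond the far endpoint of the segment certifies that the segment meets $\craB$ nowhere, while a nearer hit is necessarily the first crossing. Degeneracies (collinear overlaps near $p$, three collinear vertices) are excluded exactly as in the paper.
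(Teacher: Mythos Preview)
Your proof is correct and follows exactly the paper's approach; the paper's own proof is a two-liner (``The correctness is hopefully clear'' followed by the recurrence $T(n)=2T(n/2)+O(n\log n)=O(n\log^2 n)$), and what you have written is precisely the induction and merge-step analysis that fleshes out that sketch. Your discussion of why a ray-shooting query suffices to detect and locate the first crossing on a segment is a nice point that the paper leaves implicit.
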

\begin{proof}
    The correctness is hopefully clear. As for running time, we have $T(n) = O(n \log n) + 2T(n/2)= O(n\log^2 n)$.
\end{proof}

\subsection{Shortcutting detours}
\seclab{detour}

The input is a polygonal curve $\curveA$ in $\Re^d$ with $n$ vertices, and a parameter $\alpha > 1$. A \emphi{$\alpha$-detour} is subcurve $\curveA[u,v]$ of $\curveA$ between two vertices $u$ and $v$ of $\curveA$, such that $\lenX{curveA[u,v]} > \alpha \dY{u}{v}$ -- that is, the subcurve is $\alpha$ times longer than shortcutting from $u$ to $v$. We are interested in the problem of repeatedly shortcutting such detours, by replacing in $\curveA$ the subcurve $\curveA[u,v]$ by the direct segment $ uv$, until the resulting curve no longer has $\alpha$-detours.

\begin{observation}
    As a warm-up exercise, observe that the maximum $\beta$, such that a $\beta$-detour in $\curveA$, can be approximated using \lemref{lip}. Indeed, consider the piecewise linear mapping $f: [0,\lenX{curveA}]\rightarrow \Re^d$, that parameterizes the polygonal curve $\curveA$ uniformly.  Let $\P$ be the set of $n$ vertices of $\curveA$, and let $X \subseteq \Re$ be the $n$ values for which $f$ maps to a point of $\P$. Consider the inverse mapping $f^{-1}:\P \rightarrow X$. The \defrefY{dilation}{dilation} of $f^{-1}$ is
    \begin{equation*}
        \max_{x, y \in \Ground}
        \frac{\dmBY{f^{-1}(x)}{ f^{-1}(y)}}
        {\dmY{x}{ y}}
        =%
        \max_{x, y \in \P}
        \frac{\lenX{curveA[x,y]}}
        {\dY{x}{y}},
    \end{equation*}
    which is of course the maximum detour of $\curveA$. This quantity can be $(1-\eps)$-approximated by \lemref{lip} in $O(n \log n + n/\eps^d)$ time.
\end{observation}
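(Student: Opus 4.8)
The plan is to express the maximum detour as the dilation of a one-dimensional map and then invoke \lemref{lip}, exactly as in the statement. First I would arc-length parameterize $\curveA$: with $L = \lenX{\curveA}$, let $f\colon [0,L]\to\Re^d$ be the piecewise-linear map traversing $\curveA$ at unit speed. Let $\P = \{p_1,\dots,p_n\}$ be the vertices of $\curveA$ in order, and let $x_i = \lenX{\curveA[p_1,p_i]}$ be the arc-length coordinate of $p_i$, so that $X = \{x_1,\dots,x_n\}\subseteq\Re$ are the parameter values with $f(x_i)=p_i$. Provided the $p_i$ are pairwise distinct --- coincident vertices force an unbounded detour and are caught in $O(n)$ time, as in the distortion theorem --- the inverse $f^{-1}\colon\P\to X$, $p_i\mapsto x_i$, is well defined.

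Second, I would read off the dilation of $f^{-1}$. Since $f$ has unit speed and the $x_i$ increase monotonically along $\curveA$, we have $\cardin{x_i-x_j}=\lenX{\curveA[p_i,p_j]}$, hence for $p_i\neq p_j$
\[
\frac{\cardin{f^{-1}(p_i)-f^{-1}(p_j)}}{\dY{p_i}{p_j}}=\frac{\lenX{\curveA[p_i,p_j]}}{\dY{p_i}{p_j}},
\]
so $\LipX{f^{-1}}$ equals the maximum detour ratio of $\curveA$ (equivalently, the supremum of all $\beta$ for which $\curveA$ has a $\beta$-detour).

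Third, I would apply \lemref{lip} to $f^{-1}\colon\P\to\Re$, viewing $\P\subseteq\Re^d$ with the Euclidean metric and the image inside $\Re$ with the absolute-value metric. By \thmref{WSPD} a $\tfrac{1}{\eps}$-\WSPD of $\P$ is computable in $T(n)=O(n\log n + n/\eps^d)$ time; \lemref{lip} then produces in $O(T(n))$ time a value $\ell$ with $(1-\eps)\LipX{f^{-1}}\le\ell\le\LipX{f^{-1}}$ (or the $(1+\eps)$-sided variant), i.e.\ a $(1\pm\eps)$-approximation of the maximum detour.

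No step is a genuine obstacle --- this is a warm-up. The only thing meriting care is keeping the two metrics straight: the source distances are the $d$-dimensional Euclidean distances $\dY{p_i}{p_j}$ (so the \WSPD lives in $\Re^d$ and the $\eps^d$ term appears), whereas the image distances are the scalar arc-length gaps $\cardin{x_i-x_j}$; the identity $\cardin{x_i-x_j}=\lenX{\curveA[p_i,p_j]}$ is precisely what makes $\LipX{f^{-1}}$ coincide with the quantity we want to approximate.
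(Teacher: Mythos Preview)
Your proposal is correct and follows essentially the same approach as the paper: arc-length parameterize $\curveA$, observe that the dilation of the inverse map $f^{-1}:\P\to X\subseteq\Re$ is exactly the maximum detour ratio, and invoke \lemref{lip} together with the Euclidean \WSPD bound of \thmref{WSPD}. You add a little extra care (the degenerate coincident-vertex case and the explicit identity $\cardin{x_i-x_j}=\lenX{\curveA[p_i,p_j]}$), but the argument is the same.
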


For our purposes, we need a slight extension of the above, which provides us with the (approximate) best shortcut between the first and second halves of the curves.

\begin{lemma}
    \lemlab{bi_shortcut}%
    Let $\curveA$ be a polygonal curve in $\Re^d$, with vertices $p_1, \ldots, p_n$ (in this order along $\curveA$). Then, given parameters $\alpha > 1$ and $\eps \in (0,1)$, one can compute, in $O(n \log n + n/\eps^d)$ time, the maximum index $j \leq n/2$, and an index $k > n/2$, such that
    \begin{compactenumi}
        \item $p_j p_k$ is a $\geq (1-\eps)\alpha$-detour, and
        \item for all $s$ and $t$, such that $j < s \leq n/2 < t$, we have $p_s p_t$ is not an $\alpha$-detour.
    \end{compactenumi}
\end{lemma}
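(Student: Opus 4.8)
The plan is to reduce to a $\tfrac{1}{\eps'}$-\WSPD computation where $\eps'$ is chosen appropriately relative to $\eps$, and then scan the pairs looking for the "best" detour that crosses the midpoint of the curve. First I would set up the parameterization exactly as in the warm-up observation: let $f:[0,\lenX{\curveA}]\to\Re^d$ parameterize $\curveA$ by arclength, let $\P=\{p_1,\ldots,p_n\}$ be the vertices, and let $X\subseteq\Re$ be the arclength values mapping to $\P$. The key observation is that $p_jp_k$ being an $\alpha$-detour is exactly the statement $\lenX{\curveA[p_j,p_k]}/\dY{p_j}{p_k} > \alpha$, i.e. a lower bound on the dilation of $f^{-1}$ restricted to the pair $\{p_j,p_k\}$. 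So I want to detect, among all pairs $(p_s,p_t)$ with $s\le n/2<t$, whether any has dilation $>\alpha$, and if so, report the lexicographically-best witness (maximum $j$, then some valid $k$).

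Next I would compute an $\tfrac{8}{\eps}$-\WSPD $\WR$ of $\P\subseteq\Re^d$ via \thmref{WSPD} in $O(n\log n + n/\eps^d)$ time. For each pair $\{A,B\}\in\WR$, pick representatives $\repX{A}\in A$, $\repX{B}\in B$; by the \WSPD approximation machinery of \lemref{lip}, the ratio $\lenX{\curveA[\repX{A},\repX{B}]}/\dY{\repX{A}}{\repX{B}}$ approximates the true ratio of \emph{every} pair $(a,b)\in A\times B$ up to a $(1\pm\eps)$ factor (this is the standard argument: since $\diamX{A},\diamX{B}\le (\eps/8)\dY{\repX{A}}{\repX{B}}$, all Euclidean distances between $A$ and $B$ are the same up to $1\pm\eps/4$, and arclengths along $\curveA$ only differ by the diameters as well). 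Thus: if some crossing pair is an $\alpha$-detour, then the pair of $\WR$ containing it has representative-ratio $> (1-\eps)\alpha$; conversely if a pair of $\WR$ has representative-ratio $>(1-\eps)\alpha$ — wait, we need the other direction carefully — if no crossing pair is an $\alpha$-detour but some $\WR$-pair witnesses a large ratio, we may still safely report it as a $(1-\eps)\alpha$-detour. The cleanest framing: scan all $\{A,B\}\in\WR$ that actually \emph{contain} a crossing pair (i.e. $A$ meets $\{p_1,\ldots,p_{n/2}\}$ and $B$ meets $\{p_{n/2+1},\ldots,p_n\}$, or vice versa), and among those whose representative-ratio exceeds $(1-\eps/2)\alpha$, select the crossing pair $(p_s,p_t)$ inside with $s\le n/2$ maximal.

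To make the selection efficient and to get property (ii) right, I would process as follows. For each candidate $\WR$-pair $\{A,B\}$ with large representative-ratio, I know every crossing pair it contains is a $(1-\eps)\alpha$-detour. Among all such pairs over all candidate $\WR$-pairs, I take the one maximizing the smaller (first-half) index $j$; ties broken arbitrarily to get $k$. This handles (i). For (ii), I claim that this maximal $j$ is correct: if there were a crossing pair $(p_s,p_t)$ with $s>j$ that is a genuine $\alpha$-detour, it lies in some $\WR$-pair whose representative-ratio is then $>(1-\eps/2)\alpha$ (with room to spare, taking the \WSPD parameter $8/\eps$ and the slack constants appropriately), so it would be a candidate, contradicting maximality of $j$. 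The running time is $O(n\log n + n/\eps^d)$ for the \WSPD plus $O(|\WR|)=O(n/\eps^d)$ for the scan, provided arclength prefix sums $\lenX{\curveA[p_1,p_i]}$ are precomputed in $O(n)$ time so each $\lenX{\curveA[p_s,p_t]}$ is an $O(1)$-time query.

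The main obstacle is getting the constants and the two-sided approximation to line up so that properties (i) and (ii) are simultaneously guaranteed: we must pick a single threshold (something like $(1-\eps/2)\alpha$ on representative-ratios) such that (a) every selected crossing pair is provably a $(1-\eps)\alpha$-detour, and (b) every genuine $\alpha$-detour among crossing pairs clears the threshold. This is exactly the kind of slack-budgeting done in \lemref{lip}, where one uses an $\tfrac{8}{\eps}$-\WSPD (rather than $\tfrac{1}{\eps}$) precisely to absorb these constant factors; I expect the same $8/\eps$ choice, together with $n/\eps^d = O(n/\eps^d)$ pairs, to suffice. A secondary subtlety is that the same $\WR$-pair may contain many crossing vertex pairs, so to find the one with maximal first-half index one should, for each $\WR$-pair, just take $\max\{s : p_s\in A, s\le n/2\}$ (and symmetrically) in time proportional to $|A|+|B|$, keeping the total at $O(n/\eps^d)$ since $\sum(|A|+|B|)$ over a \WSPD is $O(n\log n + n/\eps^d)$ — actually $O(n/\eps^d)$ more carefully, but in any case within budget.
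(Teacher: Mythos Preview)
Your approach is essentially the paper's: build a \WSPD, scan the pairs for those whose representative dilation exceeds a threshold near $\alpha$, and among those return the one maximizing the left-half index. The one structural difference is that the paper builds a \emph{bichromatic} \WSPD between $L=\{p_1,\ldots,p_{n/2}\}$ and $R=\{p_{n/2+1},\ldots,p_n\}$, so every pair $\{B,C\}$ already has $B\subseteq L$, $C\subseteq R$; you instead build a monochromatic \WSPD of $\P$ and filter for pairs that ``contain a crossing pair.'' The bichromatic version is cleaner because the representatives themselves are always a crossing pair, whereas in your version a monochromatic pair $\{A,B\}$ can have $A$ straddle both halves and the representative ratio need not reflect any crossing pair at all.

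One point does need correcting. Your parenthetical justification --- ``arclengths along $\curveA$ only differ by the diameters as well'' --- is false: two points of $A$ can be $\delta$-close in Euclidean distance yet have arclength between them arbitrarily large (that is exactly what a detour is). Consequently \lemref{lip} does \emph{not} say that the representative ratio approximates the ratio of \emph{every} pair in $A\times B$; it only says that the maximum representative ratio over all \WSPD pairs approximates the global maximum dilation. So your assertion that ``every crossing pair it contains is a $(1-\eps)\alpha$-detour'' does not follow, and the converse direction you use for property~(ii) is equally unsupported. Concretely, take $A=\{p_{40},p_{90}\}$ (Euclidean-close), $B=\{p_{60}\}$, with $n/2=50$: if the representative is $p_{90}$, the representative ratio is the (non-crossing) dilation of $(p_{60},p_{90})$, which can be far above $\alpha$ while the only crossing pair $(p_{40},p_{60})$ has dilation far below $(1-\eps)\alpha$. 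The bichromatic \WSPD sidesteps this particular failure mode since $p_{40}$ and $p_{90}$ could never land in the same side of a pair.
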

\begin{proof}
    One computes the bichromatic \WSPD $\WR$ between $L = \{p_1, \ldots, p_{n/2}\}$, and $R=\{p_{n/2+1}, \ldots, p_n\}$.  One then uses the algorithm of \lemref{lip}, as described in the above observation. For any pair of $\{B,C\} \in \WR$, with $B \subseteq L$ and $C \subset R$, the algorithm also precompute the maximum index $i$, such that $p_i \in B$. Now, we modify the algorithm of \lemref{lip} scans the pairs in $\WR$ and checks for all the pairs whose dilation is above $\alpha$. For all pairs with dilation larger than $(1-\eps)\alpha$, the algorithm returns the one with maximum index $i$.
\end{proof}

\begin{lemma}
    Let $\curveA$ be a polygonal curve in $\Re^d$, and parameters $\alpha > 1$ and $\eps \in (0,1)$. One can compute, in $O((\eps^{-d} + \log n) n \log n)$ time, a subcurve of $\curveA$, formed by a subset of vertices of $\curveA$, such that the new curve is the result of performing a sequence of $(1-\eps)\alpha$-shortcuts on $\curveA$.
\end{lemma}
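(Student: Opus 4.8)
The plan is to imitate the divide-and-conquer algorithm \algDistill{} of \secref{simple_subcurve}, replacing its ray-shooting merge by the bichromatic-\WSPD merge of \lemref{bi_shortcut}. Write $\curveA = p_1 \cdots p_n$. If $\curveA$ has at most two edges, shortcut $p_1 p_n$ when it is an $\alpha$-detour (permitted, since its dilation then exceeds $\alpha \geq (1-\eps)\alpha$) and return the result; no $\alpha$-detour remains. Otherwise split $\curveA$ at its median vertex $p_m$ into $\curveA_1 = \curveA[p_1,p_m]$ and $\curveA_2 = \curveA[p_m,p_n]$, recursively compute $\curveA_1'$ and $\curveA_2'$, and let $M = \curveA_1' \concat \curveA_2'$. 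By induction $\curveA_1'$ and $\curveA_2'$ have no $\alpha$-detour, so the only $\alpha$-detours that $M$ can have are \emph{crossing} ones $p_a p_b$ with $a < m < b$. Using a variant of \lemref{bi_shortcut} --- the bichromatic split is taken at the junction vertex $p_m$ instead of at the median, and one scans for the \emph{extremal} crossing detour rather than the deepest one --- compute the largest index $b^* > m$ participating in a $\geq (1-\eps)\alpha$-detour that crosses the junction of $M$, and then the smallest index $a^* < m$ with $p_{a^*} p_{b^*}$ a $\geq (1-\eps)\alpha$-detour; if $M$ has no crossing $(1-\eps)\alpha$-detour, return $M$. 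Otherwise perform the single shortcut $p_{a^*} p_{b^*}$ and return $M' = \curveA_1'[p_1,p_{a^*}] \concat (p_{a^*}p_{b^*}) \concat \curveA_2'[p_{b^*},p_n]$.

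Correctness has two parts. First, $M'$ is reachable from $\curveA$ by $(1-\eps)\alpha$-shortcuts: every vertex of $M'$ is a vertex of $\curveA$ and $p_1, p_n$ survive; the shortcuts used in the two recursive calls act on the disjoint index ranges $\{1,\dots,m\}$ and $\{m,\dots,n\}$ (and never delete the shared vertex $p_m$), and --- performing all left-side shortcuts before all right-side ones --- each of them has, in the current partially processed curve, the same dilation it had inside its subproblem, because the subcurve it shortcuts lies entirely in one half; and the merge shortcut $p_{a^*}p_{b^*}$ is by construction a $\geq (1-\eps)\alpha$-detour of $M$, hence legal. Second, $M'$ has no $\alpha$-detour (so it is terminal). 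A pair with both endpoints in $\curveA_1'[p_1,p_{a^*}]$, or both in $\curveA_2'[p_{b^*},p_n]$, is harmless since a sub-portion of an $\alpha$-detour-free curve is $\alpha$-detour-free; the only other possibility is a pair $p_x p_y$ with $x \le a^* < m < b^* \le y$.

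This last case is the crux, and the step I expect to be the main obstacle. The key observation is that shortcutting never increases the along-curve distance between surviving vertices: combining $\dY{p_{a^*}}{p_{b^*}} \le \lenX{M[p_{a^*},p_{b^*}]}$ with a telescoping of lengths through $p_m$ gives $\lenX{M'[p_x,p_y]} \le \lenX{M[p_x,p_y]}$. Now extremality finishes it. If $y > b^*$, then (by maximality of $b^*$) $p_x p_y$ is not an $\alpha$-detour of $M$; if $x < a^*$ and $y = b^*$, then (by minimality of $a^*$) $p_x p_{b^*}$ is not an $\alpha$-detour of $M$; and if $x = a^*, y = b^*$, then $p_x$ and $p_y$ are joined in $M'$ by a single segment, of dilation $1$. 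In the first two cases $\lenX{M[p_x,p_y]} \le \alpha\,\dY{p_x}{p_y}$, so $\lenX{M'[p_x,p_y]} \le \alpha\,\dY{p_x}{p_y}$ as well (this uses only the $\alpha$-versus-$(1-\eps)\alpha$ guarantee already furnished by \lemref{bi_shortcut}). Thus one well-chosen crossing shortcut destroys every crossing $\alpha$-detour, which is exactly why the merge performs $O(1)$ shortcuts rather than an a-priori unbounded iteration; and it is to obtain this that one must shortcut the extremal crossing detour instead of, say, the deepest one returned verbatim by \lemref{bi_shortcut}, since shallower crossing detours could otherwise survive.

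Finally, the running time. The merge runs one bichromatic-\WSPD dilation computation on a curve of size $O(n)$, costing $O(n\log n + n/\eps^d)$ by \lemref{bi_shortcut}, plus $O(n)$ bookkeeping to locate $b^*$ and $a^*$. Hence $T(n) = 2T(\lceil n/2 \rceil) + O(n\log n + n/\eps^d)$, which solves to $O(n\log^2 n + \eps^{-d} n \log n) = O\pth{(\eps^{-d} + \log n)\, n\log n}$ --- as in \lemref{simple_sub_curve}, but with the additional $n/\eps^d$ term contributed by the \WSPD.
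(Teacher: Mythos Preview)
Your argument follows the paper's divide-and-conquer template (split at the median, recurse, then perform a single crossing shortcut at the merge), and you supply the correctness details that the paper's three-line proof only asserts. There is, however, one substantive difference: you shortcut at the \emph{outermost} crossing pair --- largest $b^*$ on the right, then smallest $a^*$ on the left --- whereas the paper, via \lemref{bi_shortcut} as stated, takes $p$ to be the \emph{last} vertex of $\curveC_1$ (i.e., the \emph{largest} left index $j$) and then $q$ the last vertex of $\curveC_2$ with $pq$ a detour. Your choice is the one that actually works. With the paper's choice the surviving crossing pairs $(p_x,p_y)$ with $x<j$ are unconstrained by condition~(ii) of \lemref{bi_shortcut}, and a single shortcut near the junction need not kill a far-away crossing detour: take $\curveC_1$ and $\curveC_2$ to be two long, antiparallel, nearly-coincident straight segments; then $j$ sits at the turn, the paper's shortcut removes almost nothing, and the endpoint-to-endpoint $\alpha$-detour $p_1p_n$ persists. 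Your extremality argument instead guarantees that every surviving crossing pair $(p_x,p_y)$ (necessarily with $x\le a^*$ and $y\ge b^*$) was already not an $\alpha$-detour in $M$, after which monotonicity of along-curve length under shortcutting finishes the job. So your proof is not merely a more detailed version of the paper's --- in this detail it repairs it.
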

\begin{proof}
    We imitate the algorithm of \lemref{simple_sub_curve}. Assume $\curveA= p_1 p_2 \cdots p_n$, and let $\curveB_1 = p_1 \cdots p_{n/2}$ and $\curveB_2 = p_{n/2+1} \cdots p_{n}$. We recursively perform (approximate) $\alpha$-shortcuts on $\curveB_1$ and $\curveB_2$. Let $\curveC_1$ and $\curveC_2$ be the resulting subcurves. We now use \lemref{bi_shortcut} to compute the ``last'' (approximate) $\alpha$-shortcut $pq$ for $\curveC_1 \concat \curveC_2$ where $p$ is the last vertex in $\curveC_1$ supporting such a shortcut, and $q$ is the last vertex in $\curveC_2$ with such a shortcut starting at $p$. In the resulting curve after this shortcut, no $\alpha$-detour remains. The algorithm returns it as the desired curve.
\end{proof}

\printbibliography

\end{document}